\newcommand{\Oh}[1]
    {\ensuremath{\mathcal{O} \hspace{-0.5ex} \left( {#1} \right)}}
\newtheorem{open}{Open Problem}
\begin{document}

\title{On the Value of Multiple Read/Write Streams\\for Data Compression}
\author{Travis Gagie}
\institute{Department of Computer Science and Engineering\\
Aalto University, Finland\\
\email{travis.gagie@aalto.fi}}
\maketitle

\begin{abstract}
We study whether, when restricted to using polylogarithmic memory and polylogarithmic passes, we can achieve qualitatively better data compression with multiple read/write streams than we can with only one.  We first show how we can achieve universal compression using only one pass over one stream.  We then show that one stream is not sufficient for us to achieve good grammar-based compression.  Finally, we show that two streams are necessary and sufficient for us to achieve entropy-only bounds.
\end{abstract}

\section{Introduction} \label{sec:introduction}

Massive datasets seem to expand to fill the space available and, in situations where they no longer fit in memory and must be stored on disk, we may need new models and algorithms.  Grohe and Schweikardt~\cite{GS05} introduced read/write streams to model situations in which we want to process data using mainly sequential accesses to one or more disks.  As the name suggests, this model is like the streaming model (see, e.g.,~\cite{Mut05}) but, as is reasonable with datasets stored on disk, it allows us to make multiple passes over the data, change them and even use multiple streams (i.e., disks).  As Grohe and Schweikardt pointed out, sequential disk accesses are much faster than random accesses --- potentially bypassing the von Neumann bottleneck --- and using several disks in parallel can greatly reduce the amount of memory and the number of accesses needed.  For example, when sorting, we need the product of the memory and accesses to be at least linear when we use one disk~\cite{MP80,GKS07} but only polylogarithmic when we use two~\cite{CY91,GS05}.  Similar bounds have been proven for a number of other problems, such as checking set disjointness or equality; we refer readers to Schweikardt's survey~\cite{Sch07} of upper and lower bounds with one or more read/write streams, Heinrich and Schweikardt's paper~\cite{HS08} relating read/write streams to classic complexity theory, and Beame and Huynh's paper~\cite{BH08} on the value of multiple read/write streams for approximating frequency moments.

Since sorting is an important operation in some of the most powerful data compression algorithms, and compression is an important operation for reducing massive datasets to a more manageable size, we wondered whether extra streams could also help us achieve better compression.  In this paper we consider the problem of compressing a string $s$ of $n$ characters over an alphabet of size $\sigma$ when we are restricted to using \(\log^{\mathcal{O} (1)} n\) bits of memory and \(\log^{\mathcal{O} (1)} n\) passes over the data.  Throughout, we write $\log$ to mean $\log_2$ unless otherwise stated.  In Section~\ref{sec:universal}, we show how we can achieve universal compression using only one pass over one stream.  Our approach is to break the string into blocks and compress each block separately, similar to what is done in practice to compress large files.  Although this may not usually significantly worsen the compression itself, it may stop us from then building a fast compressed index (see~\cite{NM07} for a survey) unless we somehow combine the indexes for the blocks, or clustering by compression~\cite{CV05} (since concatenating files should not help us compress them better if we then break them into pieces again).  In Section~\ref{sec:grammar-based} we use a vaguely automata-theoretic argument to show one stream is not sufficient for us to achieve good grammar-based compression.  Of course, by `good' we mean here something stronger than universal compression: we want to build a context-free grammar that generates $s$ and only $s$ and whose size is nearly minimum.  In a paper with Gawrychowski~\cite{GG10} we showed that with constant memory and logarithmic passes over a constant number of streams, we can build a grammar whose size is at most quadratic in the minimum. Finally, in Section~\ref{sec:entropy-only} we show that two streams are necessary and sufficient for us to achieve entropy-only bounds.  Along the way, we show we need two streams to find strings' minimum periods or compute the Burrows-Wheeler Transform.  As far as we know, this is the first paper on compression with read/write streams, and among the first papers on compression in any streaming model; we hope the techniques we have used will prove to be of independent interest.

\section{Universal compression} \label{sec:universal}

An algorithm is called universal with respect to a class of sources if, when a string is drawn from any of those sources, the algorithm's redundancy per character approaches 0 with probability 1 as the length of the string grows.  The class most often considered, and which we consider in this section, is that of stationary, ergodic Markov sources (see, e.g.,~\cite{CT06}).  Since the $k$th-order empirical entropy \(H_k (s)\) of $s$ is the minimum self-information per character of $s$ with respect to a $k$th-order Markov source (see~\cite{Sav97}), an algorithm is universal if it stores any string $s$ in \(n H_k (s) + o (n)\) bits for any fixed $\sigma$ and $k$.  The $k$th-order empirical entropy of $s$ is also our expected uncertainty about a randomly chosen character of $s$ when given the $k$ preceding characters.  Specifically,
\[H_k (s) = \left\{ \begin{array}{ll}
    (1 / n) \sum_a \mathsf{occ} (a, s) \log \frac{n}{\mathsf{occ}(a, s)} &
    \mbox{if \(k = 0\),}\\[1ex]
    (1 / n) \sum_{|w| = k} |w_s| H_0 (w_s) &
    \mbox{otherwise,}
\end{array} \right.\]
where \(\mathsf{occ} (a, s)\) is the number of times character $a$ occurs in $s$, and $w_s$ is the concatenation of those characters immediately following occurrences of $k$-tuple $w$ in $s$.

In a previous paper~\cite{GM07b} we showed how to modify the well-known LZ77 compression algorithm~\cite{ZL77} to use sublinear memory while still storing $s$ in \(n H_k (s) + \Oh{n \log \log n / \log n}\) bits for any fixed $\sigma$ and $k$.  Our algorithm uses nearly linear memory and so does not fit into the model we consider in this paper, but we mention it here because it fits into some other streaming models (see, e.g.,~\cite{Mut05}) and, as far as we know, was the first compression algorithm to do so.  In the same paper we proved several lower bounds using ideas that eventually led to our lower bounds in Sections~\ref{sec:grammar-based} and~\ref{sec:entropy-only} of this paper.

\begin{theorem}[Gagie and Manzini, 2007] \label{thm:GM07b}
We can achieve universal compression using one pass over one stream and $\Oh{n / \log^2 n}$ bits of memory.
\end{theorem}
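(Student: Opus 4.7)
The plan is to chop the stream into blocks and compress each block separately, as the introduction foreshadows.  I would fix a block length $b = \Theta(n / \log^2 n)$ so that one block fits in our memory budget with room for workspace, read the stream one block at a time, apply an in-memory universal compressor to each buffered block, emit its codeword (prefixed by a short header giving its length), and then overwrite the block before reading the next.  This uses only one pass and $\Oh{n / \log^2 n}$ bits of memory.

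For the per-block compressor I would use an adapted version of the algorithm of Theorem~\ref{thm:GM07b}, applied to one block at a time.  Its ``nearly linear memory'' footprint is no longer an obstacle, because a single block of size $b$ together with any linear auxiliary data already fits in our memory budget.  The analysis of that algorithm then gives output of at most $|B| \, H_k (B) + \Oh{|B| \log \log |B| / \log |B|}$ bits on each block $B$.  Summed over the $m = n / b = \Theta(\log^2 n)$ blocks, these redundancy contributions add to $\Oh{n \log \log n / \log n} = o(n)$, and the $\Oh{m \log n} = \Oh{\log^3 n}$ bits spent on headers are negligible.

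The remaining step is to compare $\sum_{i=1}^m |B_i| \, H_k (B_i)$ with $n \, H_k (s)$.  A standard counting argument --- bounding how many of the $\sigma^k$ possible $k$-tuples can straddle each of the $m - 1$ block boundaries, and how much empirical entropy each such boundary $k$-tuple can contribute --- yields
\[\sum_{i=1}^m |B_i| \, H_k (B_i) \; \le \; n \, H_k (s) + \Oh{m \, \sigma^k \log b}.\]
For fixed $\sigma$ and $k$ the right-hand error term is $\Oh{\log^3 n} = o(n)$, so the total output length is $n \, H_k (s) + o(n)$ bits, as required for universality.

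The main obstacle is really this last comparison: a priori, slicing $s$ into blocks can destroy the higher-order correlations that $H_k$ exploits and inflate the sum of per-block entropies well above $n \, H_k (s)$.  Controlling the inflation by only $\Oh{m \, \sigma^k \log b}$ requires a careful accounting of the $k$-tuples straddling block boundaries, and is where the assumption that $\sigma$ and $k$ are fixed becomes essential; choosing the block size and verifying that the per-block compressor fits in the memory budget are, by comparison, routine.
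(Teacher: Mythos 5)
The paper never proves this statement: it is imported as a black-box citation of \cite{GM07b}, and the proof there (as the preceding paragraph of the paper indicates) runs a modified LZ77 with a bounded window over the stream rather than cutting the input into independent blocks. Your blocking route is therefore a genuinely different derivation --- in fact it is essentially the paper's own proof of Corollary~\ref{cor:universal} with larger blocks --- and your boundary accounting (per-context concavity of \(|w_s| H_0 (w_s)\), plus a polylogarithmic correction for the $k$-tuples straddling the \(m - 1\) boundaries) is precisely the step the paper asserts without proof in that corollary. Since the theorem claims only universality and an upper bound on memory, a correct blocking argument does establish it.

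Two points are wrong as written, however. First, your per-block compressor is ``an adapted version of the algorithm of Theorem~\ref{thm:GM07b}'' --- the very result being proved; to avoid circularity you must substitute an independent in-memory compressor such as that of Theorem~\ref{thm:GGV08}. Second, the memory accounting: ``linear auxiliary data'' for a RAM-model algorithm on a block of $b$ characters means \(\Theta (b)\) words, i.e.\ \(\Theta (b \log b)\) bits, which for \(b = \Theta (n / \log^2 n)\) is \(\Theta (n / \log n)\) bits and exceeds the \(\Oh{n / \log^2 n}\) budget. This is easily repaired --- take \(b = n / \log^3 n\), or simply the \(\log^\epsilon n\)-character blocks of Corollary~\ref{cor:universal}, since using less memory is still within the stated bound --- but the claim that a block ``together with any linear auxiliary data already fits'' is off by a logarithmic factor. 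Note finally that, unlike the windowed-LZ77 argument of \cite{GM07b}, your proof does not recover that paper's \(\Oh{n \log \log n / \log n}\) redundancy; it yields only universality, which happens to be all the theorem statement asks for.
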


To achieve universal compression with only polylogarithmic memory, we use a algorithm due to Gupta, Grossi and Vitter~\cite{GGV08}.  Although they designed it for the RAM model, we can easily turn it into a streaming algorithm by processing $s$ in small blocks and compressing each block separately.

\begin{theorem}[Gupta, Grossi and Vitter, 2008] \label{thm:GGV08}
In the RAM model, we can store any string $s$ in \(n H_k (s) + \Oh{\sigma^k \log n}\) bits, for all $k$ simultaneously, using $\Oh{n}$ time.
\end{theorem}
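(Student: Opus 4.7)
The plan is to use the Burrows--Wheeler Transform (BWT) as the workhorse. First I would compute $\mathrm{BWT}(s)$ in $\Oh{n}$ time using a linear-time suffix-array construction; this is a permutation of $s$ that clusters together characters sharing the same right-context, so that zero-order statistics of contiguous stretches of the BWT correspond to higher-order statistics of $s$.

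Next, for any fixed $k$, observe that $nH_k(s) = \sum_{|w|=k}|w_s|\,H_0(w_s)$, and the strings $w_s$ appear (up to reordering) as at most $\sigma^k$ contiguous blocks in $\mathrm{BWT}(s)$, because the BWT sorts characters by the $k$-context that follows them. Cutting $\mathrm{BWT}(s)$ at those $k$-context boundaries and encoding each block with a near-optimal zero-order coder (e.g., an arithmetic code over its own frequency table) thus uses $nH_k(s)$ bits of payload plus $\Oh{\sigma^k\log n}$ bits of overhead for the boundary positions and the per-block alphabet/model information. To get \emph{all} $k$ simultaneously out of a single encoding, rather than committing to a particular partition I would instead choose the partition of $\mathrm{BWT}(s)$ that minimises the total cost (zero-order payload plus an $\Oh{\log n}$ charge per boundary), computed by dynamic programming over the positions of $\mathrm{BWT}(s)$. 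Since this optimum is never worse than the $k$-induced partition, its length is simultaneously bounded by $nH_k(s)+\Oh{\sigma^k\log n}$ for every $k$ at once, which is the promised guarantee.

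The main obstacle is keeping the total running time linear. The naive dynamic program, iterating over $n$ positions and $n$ candidate previous cuts, costs $\Oh{n^2}$, so I expect to need structural properties of the per-block cost function (monotonicity or convexity in block length) that bound the number of useful candidate split points per position, together with a zero-order coder whose cost can be maintained incrementally as a block is extended by one character. Once that speedup is in place, the entropy bound for all $k$ follows from the BWT-plus-optimal-partition argument above, and the whole pipeline runs in $\Oh{n}$ time in the RAM model.
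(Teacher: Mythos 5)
First, a caveat: the paper does not prove this statement---it is quoted from Gupta, Grossi and Vitter (2008) and used as a black box in Corollary~\ref{cor:universal}---so there is no in-paper proof to compare yours against. Judged on its own terms, your plan is essentially the ``compression boosting'' route of Ferragina, Giancarlo, Manzini and Sciortino: transform the string with the BWT, note that for every $k$ the partition of the BWT into length-$k$ context blocks has total zero-order cost $n H_k(s)$ plus per-block overhead, and then dominate all of these partitions at once by taking an optimal partition. The space half of this is sound, modulo two small points you should tidy up: the BWT as described here groups characters by the context that \emph{follows} them while $H_k$ conditions on the \emph{preceding} $k$-tuple, so you want the BWT of the reversed string (or the symmetric definition); and each block's model costs $\Oh{\sigma \log n}$ bits, so the overhead of the $k$-induced partition is really $\Oh{\sigma^{k+1} \log n}$, which only matches the claimed $\Oh{\sigma^k \log n}$ when $\sigma$ is treated as constant.

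The genuine gap is exactly the step you flag and then defer: computing the optimal partition in $\Oh{n}$ time. The dynamic program over all $n$ cut positions is $\Theta(n^2)$, and there is no known concavity or monotonicity of the per-block cost function that lets a generic DP speedup close this; ``I expect to need structural properties'' is the theorem, not a proof of it. The known fix shrinks the search space rather than speeding up the search: restrict attention to partitions of the BWT induced by nodes of the suffix tree of the reversed string. Every length-$k$ context partition is of this form, so the restricted optimum still simultaneously dominates all of them, and it can be computed by one bottom-up traversal of the suffix tree, comparing at each node the cost of encoding its block whole against the sum of its children's optimal costs; together with linear-time suffix-tree construction and a zero-order coder whose cost is maintainable incrementally, this yields $\Oh{n}$ time. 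Without this (or an equivalent) idea, your argument establishes the space bound for each fixed $k$ but not the claimed simultaneous, linear-time encoding.
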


\begin{corollary} \label{cor:universal}
We can achieve universal compression using one pass over one stream and $\Oh{\log^{1 + \epsilon} n}$ bits of memory.
\end{corollary}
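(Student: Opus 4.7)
The plan is to partition $s$ into consecutive blocks of length $b = \Theta(\log^{1 + \epsilon} n / \log \sigma)$, so that each block occupies $\Oh{\log^{1 + \epsilon} n}$ bits, and to apply the Gupta--Grossi--Vitter algorithm of Theorem~\ref{thm:GGV08} to each block in turn. Streaming over $s$, I would read one block into memory, run Theorem~\ref{thm:GGV08} on it, write the resulting codeword to the output stream, and discard the block before reading the next. Since every block is only $\Oh{\log^{1 + \epsilon} n}$ bits long, both the working space of the RAM-model algorithm and the output buffer fit in $\Oh{\log^{1 + \epsilon} n}$ bits, and a single left-to-right pass over $s$ suffices.

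It remains to verify that the total output length is $n H_k (s) + o (n)$ bits for every fixed $\sigma$ and $k$. If $s_1, \ldots, s_t$ are the blocks, with $t = n / b$, Theorem~\ref{thm:GGV08} produces altogether
\[\sum_{i=1}^{t} \bigl( |s_i|\, H_k (s_i) + \Oh{\sigma^k \log b} \bigr) \ = \ \sum_{i=1}^{t} |s_i|\, H_k (s_i) \ + \ \Oh{(n / b)\, \sigma^k \log \log n}\]
bits. The second term is $o (n)$ because $b = \omega(\log \log n)$ and $\sigma, k$ are fixed. For the first term I would invoke the standard partition bound
\[\sum_{i=1}^{t} |s_i|\, H_k (s_i) \ \leq \ n H_k (s) + \Oh{t k \log \sigma} \ = \ n H_k (s) + o (n),\]
which for $k = 0$ is a one-line application of the log-sum (Jensen) inequality to the symbol counts, and for general $k$ follows by applying that inequality within each length-$k$ context class while separately charging the at most $k$ characters per block whose in-block context differs from their in-$s$ context.

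\textbf{Main obstacle.} The only nonroutine step is that last inequality: for $k = 0$ it is pure convexity, but for $k \geq 1$ one must honestly account for the $\Oh{tk}$ boundary characters whose contexts straddle block boundaries, and check that their aggregate contribution of $\Oh{t k \log \sigma}$ bits is absorbed into $o(n)$ by the choice of $b$. Everything else---the block decomposition, the memory analysis, and the summation of the $\Oh{\sigma^k \log b}$ per-block overheads---is routine bookkeeping once Theorem~\ref{thm:GGV08} is used as a black box.
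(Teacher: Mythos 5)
This is essentially the paper's own proof: the same block decomposition, the same black-box use of Theorem~\ref{thm:GGV08} on each block, and the same summation of per-block entropies and overheads. Two remarks. First, your block size is a logarithmic factor too large for the memory bound you claim: a linear-time RAM algorithm run on a $b$-character block uses $\Oh{b \log n}$ bits of working space (this is exactly how the paper accounts for it), so with \(b = \Theta(\log^{1+\epsilon} n / \log \sigma)\) and fixed $\sigma$ you get $\Oh{\log^{2+\epsilon} n}$ bits, not $\Oh{\log^{1+\epsilon} n}$; the paper instead takes blocks of \(\log^{\epsilon} n\) \emph{characters}, so that the working space is $\Oh{\log^{1+\epsilon} n}$ bits. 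This is trivially repaired by shrinking the block, and the redundancy analysis survives since the per-block overhead still sums to \(\Oh{\sigma^k n \log\log n / \log^{\epsilon} n} = o(n)\) for fixed $\sigma$ and $k$. Second, the superadditivity step \(\sum_i |s_i| H_k(s_i) \leq n H_k(s) + \Oh{t k \log \sigma}\), which you rightly single out as the only nonroutine point, is exactly what the paper's displayed inequality asserts implicitly and without proof; your justification (log-sum inequality within each context class, plus an $\Oh{tk\log\sigma}$ charge for the at most $k$ characters per block whose in-block context is truncated) is the standard and correct one, so on this point your write-up is actually more complete than the paper's.
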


\begin{proof}
We process $s$ in blocks of \(\log^\epsilon n\) characters, as follows: we read each block into memory, apply Theorem~\ref{thm:GGV08} to it, output the result, empty the memory, and move on to the next block.  (If $n$ is not given in advance, we increase the block size as we read more characters.)  Since Gupta, Grossi and Vitter's algorithm uses $\Oh{n}$ time in the RAM model, it uses $\Oh{n \log n}$ bits of memory and we use $\Oh{\log^{1 + \epsilon} n}$ bits of memory.  If the blocks are \(s_1, \ldots, s_b\), then we store all of them in a total of
\[\sum_{i = 1}^b \left( |s_i| H_k (s_i) + \Oh{\sigma^k \log \log n} \right)
\leq n H_k (s) + \Oh{\sigma^k n \log \log n / \log^\epsilon n}\]
bits for all $k$ simultaneously.  Therefore, for any fixed $\sigma$ and $k$, we store $s$ in \(n H_k (s) + o (n)\) bits. \qed
\end{proof}

A bound of \(n H_k (s) + \Oh{\sigma^k n \log \log n / \log^\epsilon n}\) bits is not very meaningful when $k$ is not fixed and grows as fast as \(\log \log n\), because the second term is \(\omega (n)\).  Notice, however, that Gupta et al.'s bound of \(n H_k (s) + \Oh{\sigma^k \log n}\) bits is also not very meaningful when \(k \geq \log n\), for the same reason.  As we will see in Section~\ref{sec:entropy-only}, it is possible for $s$ to be fairly incompressible but still to have \(H_k (s) = 0\) for \(k \geq \log n\).  It follows that, although we can prove bounds that hold for all $k$ simultaneously, those bounds cannot guarantee good compression in terms of \(H_k (s)\) when \(k \geq \log n\).

By using larger blocks --- and, thus, more memory --- we can reduce the $\Oh{\sigma^k n \log \log n / \log^\epsilon n}$ redundancy term in our analysis, allowing $k$ to grow faster than \(\log \log n\) while still having a meaningful bound.  Specifically, if we process $s$ in blocks of $c$ characters, then we use $\Oh{c \log n}$ bits of memory and achieve a redundancy term of $\Oh{\sigma^k n \log c\,/\,c}$, allowing $k$ to grow nearly as fast as \(\log_\sigma c\) while still having a meaningful bound.  We will show later, in Theorem~\ref{thm:redundancy lbound}, that this tradeoff is nearly optimal: if we use $m$ bits of memory and $p$ passes over one stream and our redundancy term is $\Oh{\sigma^k r}$, then \(m p r = \Omega (n / f (n))\) for any function $f$ that increases without bound.  It is not clear to us, however, whether we can modify Corollary~\ref{cor:universal} to take advantage of multiple passes.

\begin{open} \label{opn:tradeoff}
With multiple passes over one stream, can we achieve better bounds on the memory and redundancy than we can with one pass?
\end{open}

\section{Grammar-based compression} \label{sec:grammar-based}

Charikar et al.~\cite{CLL+05} and Rytter~\cite{Ryt03} independently showed how to build a nearly minimal context-free grammar {\sf APPROX} that generates $s$ and only $s$.  Specifically, their algorithms yield grammars that are an $\Oh{\log n}$ factor larger than the smallest such grammar {\sf OPT}, which has size \(\Omega (\log n)\) bits.

\begin{theorem}[Charikar et al., 2005; Rytter, 2003] \label{thm:grammar_RAM}
In the RAM model, we can approximate the smallest grammar with \(|\mathsf{APPROX}| = \Oh{|\mathsf{OPT}|^2}\) using $\Oh{n}$ time.
\end{theorem}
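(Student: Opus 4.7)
The plan is to combine two ingredients: the LZ77 factorization and balanced straight-line grammars. First I would compute the LZ77 parse \(s = s_1 s_2 \cdots s_z\) in $\Oh{n}$ time using a suffix tree, so that each phrase $s_i$ is either a single character or the longest substring of $s_1 \cdots s_{i-1}$ beginning at position \(|s_1 \cdots s_{i-1}| + 1\). A standard argument bounds $z \leq |\mathsf{OPT}|$: given any grammar $G$ generating only $s$, a greedy left-to-right walk of its derivation tree yields an LZ77-like factorization with at most as many phrases as $G$ has nonterminals, and the true LZ77 parse can only be shorter.

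The second step is to turn the parse into a grammar of size $\Oh{z \log n}$. I would maintain incrementally an AVL-balanced straight-line grammar $G_i$ that derives the prefix $s_1 \cdots s_i$, together with the nonterminal $A_i$ representing this prefix. To process the next phrase $s_{i+1}$, which is a substring of $s_1 \cdots s_i$, I would descend into the AVL derivation tree rooted at $A_i$ and extract the $\Oh{\log n}$ maximal nonterminals whose yields concatenate to $s_{i+1}$; I would then stitch these together with a logarithmic number of new balanced rules to obtain a nonterminal $B_{i+1}$ deriving $s_{i+1}$, and finally form $A_{i+1}$ by AVL-concatenating $A_i$ with $B_{i+1}$, again at cost $\Oh{\log n}$ new rules and rotations. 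Each phrase thus contributes $\Oh{\log n}$ rules, so \(|\mathsf{APPROX}| = \Oh{z \log n} = \Oh{|\mathsf{OPT}| \log n}\), which is $\Oh{|\mathsf{OPT}|^2}$ because $|\mathsf{OPT}| = \Omega(\log n)$. The total running time is $\Oh{n}$ since LZ77 is linear and each phrase is handled in time proportional to the rules it introduces.

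The main obstacle is the balanced-grammar surgery in the second step: cleanly extracting an arbitrary substring of a grammar-compressed prefix and splicing it back on as a new suffix, without creating more than $\Oh{\log n}$ rules per phrase. The right abstraction is to view AVL derivation trees as balanced binary trees supporting split and concatenate in $\Oh{\log n}$ time; once that data structure is in place, the per-phrase bound is immediate, and the whole proof reduces to the two clean inequalities ``$z \leq |\mathsf{OPT}|$'' and ``AVL split/concatenate introduces $\Oh{\log n}$ rules'', each of which admits a short self-contained argument.
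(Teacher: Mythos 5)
The paper does not prove this theorem---it is imported verbatim from Charikar et al.\ and Rytter---and your sketch is precisely Rytter's argument: LZ77 parsing, the bound \(z \leq |\mathsf{OPT}|\), an incrementally maintained AVL-balanced grammar costing \Oh{\log n} new rules per phrase, and the observation that \(|\mathsf{OPT}| = \Omega(\log n)\) converts the \Oh{\log n} approximation ratio into \Oh{|\mathsf{OPT}|^2}, which is exactly how the paper frames the result in the sentence preceding the theorem. One correction: the greedy left-to-right walk of a derivation tree yields at most as many phrases as the grammar's \emph{total size} (the sum of the lengths of the right-hand sides), not as its number of nonterminals---a single production with a long right-hand side already refutes the latter claim---but since \(|\mathsf{OPT}|\) is measured as total size, the inequality \(z \leq |\mathsf{OPT}|\) that you actually need is the correct form of the lemma.
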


\noindent In this section we prove that, if we use only one stream, then in general our approximation must be superpolynomially larger than the smallest grammar.  Our idea is to show that periodic strings whose periods are asymptotically slightly larger than the product of the memory and passes, can be encoded as small grammars but, in general, cannot be compressed well by algorithms that use only one stream.  Our argument is based on the following two lemmas.

\begin{lemma} \label{lem:small grammar}
If $s$ has period $\ell$, then the size of the smallest grammar for that string is $\Oh{\ell \log \sigma + \log n \log \log n}$ bits.
\end{lemma}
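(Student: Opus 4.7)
The plan is to build an explicit straight-line grammar that exploits the periodic structure of $s$ and then count its bits carefully. Let $k = \lfloor n/\ell \rfloor$ and $m = n - k\ell < \ell$, so that $s = p^k p'$ where $p = s[1..\ell]$ is one full period of $s$ and $p' = p[1..m]$ is a prefix of $p$.

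The grammar I would use consists of four groups of productions. First, a single rule $P \to p_1 p_2 \cdots p_\ell$ spells out the period. Second, a doubling chain: setting $P_0 := P$, introduce rules $P_{i+1} \to P_i P_i$ for $i = 0, 1, \ldots, \lceil \log k \rceil$, so that $P_i$ derives $p^{2^i}$. Third, \Oh{\log k} binary concatenation rules combine the variables $P_i$ indexed by the $1$-bits of $k$ into a variable $R$ that derives $p^k$. Fourth, a rule $Q \to p'_1 \cdots p'_m$ for the tail, together with a start rule $S \to R Q$, completes the grammar. The total number of variables is $V = \Oh{\log n}$.

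To count bits I would use a self-delimiting encoding that stores each terminal in $\lceil \log \sigma \rceil$ bits and each nonterminal reference in $\lceil \log V \rceil = \Oh{\log \log n}$ bits, with an Elias-coded length prefix for every rule. The two spelling rules for $p$ and $p'$ together contain at most $2\ell$ terminals, contributing \Oh{\ell \log \sigma} bits. All remaining rules have constant-length right-hand sides and reference only nonterminals; there are \Oh{\log n} such references in total, contributing \Oh{\log n \log \log n} bits. The \Oh{\log n} rule-length prefixes fit in a further \Oh{\log n \log \log n} bits, which gives the claimed total.

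The main obstacle is really just bookkeeping: one has to keep the encoding of terminals (costing $\log \sigma$ bits each) strictly separated from that of nonterminals (costing $\log \log n$ bits each), because lumping the two into a single alphabet of size $\sigma + V$ would inflate the first term to $\ell \log(\sigma + \log n)$ and break the bound when $\log n \gg \sigma$. Beyond this care, no idea beyond the standard repeated-squaring representation of $p^k$ is needed.
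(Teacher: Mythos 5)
Your construction is correct and is essentially the paper's own proof spelled out in more detail: the paper also decomposes $s$ as $t^{\lfloor n/\ell\rfloor}t'$, encodes the exponent via a unary-string grammar with $\Oh{\log n}$ productions of total size $\Oh{\log n \log\log n}$ bits (which is exactly your repeated-squaring chain), spells out the period and the tail in $\Oh{\ell\log\sigma}$ bits, and glues the pieces together with two extra productions. Your explicit separation of terminal and nonterminal encoding costs is a sensible piece of bookkeeping that the paper leaves implicit.
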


\begin{proof}
Let $t$ be the repeated substring and $t'$ be the proper prefix of $t$ such that \(s = t^{\lfloor n / \ell \rfloor} t'\).  We can encode a unary string $X^{\lfloor n / \ell \rfloor}$ as a grammar $G_1$ with $\Oh{\log n}$ productions of total size $\Oh{\log n \log \log n}$ bits.  We can also encode $t$ and $t'$ as grammars $G_2$ and $G_3$ with $\Oh{\ell}$ productions of total size $\Oh{\ell \log \sigma}$ bits.  Suppose $S_1$, $S_2$ and $S_3$ are the start symbols of $G_1$, $G_2$ and $G_3$, respectively.  By combining those grammars and adding the productions \(S_0 \rightarrow S_1 S_3\) and \(X \rightarrow S_2\), we obtain a grammar with $\Oh{\ell + \log n}$ productions of total size $\Oh{\ell \log \sigma + \log n \log \log n}$ bits that maps $S_0$ to $s$. \qed
\end{proof}

\begin{lemma} \label{lem:substring}
Consider a lossless compression algorithm that uses only one stream, and a machine performing that algorithm.  We can compute any substring from
\begin{itemize}
\item its length;
\item for each pass, the machine's memory configurations when it reaches and leaves the part of the stream that initially holds that substring;
\item all the output the machine produces while over that part.
\end{itemize}
\end{lemma}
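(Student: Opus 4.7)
The plan is to reconstruct the substring by exhaustive simulation. Writing $L$ for the recorded length of the substring, I would enumerate all $\sigma^L$ candidate strings and, for each candidate $u$, simulate the machine pass by pass within the region of the stream that initially holds the substring. In pass $i$ the simulation starts from the recorded entry configuration for that pass, uses the current contents of the region (initially $u$, then modified by whatever the simulation itself wrote during earlier passes), and checks both that the simulated exit configuration agrees with the recorded exit configuration for pass $i$ and that the output produced while the head is inside the region agrees with the recorded local output for pass $i$. The algorithm outputs any candidate that survives all such checks in every pass.

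Correctness reduces to showing that at most one candidate can be consistent with the records. Suppose for contradiction that two distinct candidates $u \neq u'$ both pass. Let $xuy$ and $xu'y$ be the two full initial stream contents obtained by substituting $u$ or $u'$ into the region, where $x$ and $y$ are the (common) portions of the stream outside the region. I claim the machine produces identical total output on these two inputs. While the head is outside the region, the machine's behavior depends only on its current memory and on the contents of the cells outside the region; by hypothesis the memory at every boundary crossing matches the recorded configuration, and since the machine cannot alter a cell while its head is elsewhere, the exterior contents evolve identically in the two runs. While the head is inside the region, the emitted output is the same for both candidates by assumption. Summing over all passes, the two runs emit exactly the same compressed bitstream, so no decompressor could distinguish $xuy$ from $xu'y$, contradicting losslessness.

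The only real subtlety is the multi-pass bookkeeping needed to justify the claim above, so I expect that to be the main obstacle. I would formalise it by induction on passes, maintaining the invariant that throughout the computation on $xuy$ and on $xu'y$ the memory at every boundary crossing and the contents of every cell outside the region agree: the base case is the recorded entry configuration of pass~1, and the inductive step is exactly the hypothesis that the recorded boundary configurations are matched by both candidates, together with the observation that the exterior cells cannot be touched while the head sits inside the region. Once this invariant is in hand, the uniqueness argument goes through and the exhaustive simulation returns the unique, hence correct, substring.
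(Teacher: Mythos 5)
Your proposal is correct and follows essentially the same route as the paper: the paper's proof is exactly your uniqueness step, arguing by contradiction that two distinct substrings consistent with the same boundary configurations and local output could be swapped without changing the machine's complete output, violating losslessness. Your added enumeration-and-simulation front end and the pass-by-pass invariant are just a more explicit rendering of what the paper leaves implicit.
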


\begin{proof}
Let $t$ be the substring and assume, for the sake of a contradiction, that there exists another substring $t'$ with the same length that takes the machine between the same configurations while producing the same output.  Then we can substitute $t'$ for $t$ in $s$ without changing the machine's complete output, contrary to our specification that the compression be lossless. \qed
\end{proof}

Lemma~\ref{lem:substring} implies that, for any substring, the size of the output the machine produces while over the part of the stream that initially holds that substring, plus twice the product of the memory and passes (i.e., the number of bits needed to store the memory configurations), must be at least that substring's complexity.  Therefore, if a substring is not compressible by more than a constant factor (as is the case for most strings) and asymptotically larger than the product of the memory and passes, then the size of the output for that substring must be at least proportional to the substring's length.  In other words, the algorithm cannot take full advantage of similarities between substrings to achieve better compression.  In particular, if $s$ is periodic with a period that is asymptotically slightly larger than the product of the memory and passes, and $s$'s repeated substring is not compressible by more than a constant factor, then the algorithm's complete output must be \(\Omega (n)\) bits.  By Lemma~\ref{lem:small grammar}, however, the size of the smallest grammar that generates $s$ and only $s$ is bounded in terms of the period.

\begin{theorem} \label{thm:grammar-based}
With one stream, we cannot approximate the smallest grammar with \(|\mathsf{APPROX}| \leq |\mathsf{OPT}|^{\mathcal{O} (1)}\).
\end{theorem}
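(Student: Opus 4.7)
The plan is to instantiate the informal argument sketched just before the statement. I would fix a hypothetical compression algorithm, and a machine performing it, that uses $m = \log^{\Oh{1}} n$ bits of memory and $p = \log^{\Oh{1}} n$ passes over one stream, and set $M = mp$, which is still polylogarithmic in $n$. I would choose the period $\ell$ so that $\ell/M \to \infty$ while $\ell$ itself remains polylogarithmic in $n$ --- for instance $\ell = M \log \log n$ --- and pick a string $t$ of length $\ell$ whose Kolmogorov complexity is at least $\ell \log \sigma - 1$, which exists by a standard counting argument. Then I would let $s = t^{\lfloor n/\ell \rfloor} t'$, where $t'$ is the appropriate prefix of $t$, and prove the theorem against this specific $s$.

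First I would upper-bound $|\mathsf{OPT}|$ using Lemma~\ref{lem:small grammar}: since $s$ has period $\ell$, the smallest grammar that generates $s$ and only $s$ has size $\Oh{\ell \log \sigma + \log n \log \log n}$ bits, which is $\log^{\Oh{1}} n$.

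Next I would lower-bound $|\mathsf{APPROX}|$ via a counting argument powered by Lemma~\ref{lem:substring}. For the $i$th full occurrence of $t$ inside $s$, let $c_i$ denote the total number of output bits the machine produces while its head is on that copy, summed over all $p$ passes. By Lemma~\ref{lem:substring}, $t$ is recovered from $\ell$, the $2p$ memory configurations at the boundaries of that copy (total $2M$ bits), and those $c_i$ output bits; packaging these self-delimitingly yields a description of $t$ of length $c_i + 2M + \Oh{\log \ell}$. By the choice of $t$ this length is at least $\ell \log \sigma - 1$, and since $M = o(\ell)$ we obtain $c_i = \Omega(\ell \log \sigma)$. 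Summing over the $\lfloor n/\ell \rfloor$ full copies of $t$ gives that the algorithm's total output, and hence $|\mathsf{APPROX}|$, is $\Omega(n \log \sigma) = \Omega(n)$ bits.

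Combining the two bounds contradicts $|\mathsf{APPROX}| = |\mathsf{OPT}|^{\Oh{1}}$: for $|\mathsf{OPT}|$ polylogarithmic in $n$, every fixed power $|\mathsf{OPT}|^c$ is again polylogarithmic, whereas $|\mathsf{APPROX}| = \Omega(n)$ is not. The main obstacle I anticipate is choosing $\ell$ carefully: it must simultaneously dominate $M$ and the $\Oh{\log \ell}$ self-delimiting overhead (so that $c_i = \Omega(\ell \log \sigma)$), yet keep $\ell \log \sigma$ polylogarithmic (so that $|\mathsf{OPT}|$ remains polylogarithmic). Any slowly growing multiplier such as $\log \log n$ comfortably meets both constraints, so the argument is otherwise robust; the only subtlety is checking that the self-delimiting encoding of the length $\ell$ and of the boundary configurations does not blow up the description of $t$ beyond $\ell \log \sigma - O(1)$ bits.
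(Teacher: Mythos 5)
Your proposal is correct and follows essentially the same route as the paper's own proof: a periodic string whose period is polylogarithmic but asymptotically larger than the memory--passes product, with an incompressible repeated substring, combined with Lemma~\ref{lem:small grammar} for the upper bound on \(|\mathsf{OPT}|\) and Lemma~\ref{lem:substring} for the \(\Omega(n)\) lower bound on the output. Your version is somewhat more explicit about the Kolmogorov-complexity accounting (self-delimiting overhead, the \(2mp\) bits of boundary configurations) than the paper, which simply takes the period to be \(mp\log n\) and asserts the \(\Omega(mp\log n - mp)\) bound per copy, but the argument is the same.
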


\begin{proof}
Suppose an algorithm uses only one stream, $m$ bits of memory and $p$ passes to compress $s$, with \(m p = \log^{\mathcal{O} (1)} n\), and consider a machine performing that algorithm.  Furthermore, suppose $s$ is binary and periodic with period \(m p \log n\) and its repeated substring $t$ is not compressible by more than a constant factor.  Lemma~\ref{lem:substring} implies that the machine's output while over a part of the stream that initially holds a copy of $t$, must be \(\Omega (m p \log n - m p) = \Omega (m p \log n)\).  Therefore, the machine's complete output must be \(\Omega (n)\) bits.  By Lemma~\ref{lem:small grammar}, however, the size of the smallest grammar that generates $s$ and only $s$ is \(\mathcal{O} (m p \log n + \log n \log \log n) \subset \log^{\mathcal{O} (1)} n\) bits.  Since \(n = \log^{\omega (1)} n\), the algorithm's complete output is superpolynomially larger than the smallest grammar. \qed
\end{proof}

As an aside, we note that a symmetric argument shows that, with only one stream, in general we cannot decode a string encoded as a small grammar.  To see why, instead of considering a part of the stream that initially holds a copy of the repeated substring $t$, consider a part that is initially blank and eventually holds a copy of $t$.  (Since $s$ is periodic and thus very compressible, its encoding takes up only a fraction of the space it eventually occupies when decompressed; without loss of generality, we can assume the rest is blank.)
An argument similar to the proof of Lemma~\ref{lem:substring} shows we can compute $t$ from the machine's memory configurations when it reaches and leaves that part, so the product of the memory and passes must again be greater than or equal to $t$'s complexity.

\begin{theorem} \label{thm:decoding}
With one stream, we cannot decompress strings encoded as small grammars.
\end{theorem}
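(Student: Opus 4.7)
The plan is to mirror the proof of Theorem~\ref{thm:grammar-based}, swapping the roles of input and output. I would suppose for contradiction that some one-stream algorithm using $m$ bits of memory and $p$ passes with \(m p = \log^{\mathcal{O} (1)} n\) can correctly decompress every small grammar, and I would feed it the grammar supplied by Lemma~\ref{lem:small grammar} for a binary periodic string $s$ of length $n$ with period $\ell = m p \log n$ whose repeated substring $t$ is incompressible by more than a constant factor. This grammar has size \(\log^{\mathcal{O} (1)} n\) bits, which is a negligible fraction of $n$; as the author remarks, we may assume that the rest of the tape starts blank.

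The main step is a decoding analogue of Lemma~\ref{lem:substring}: if $R$ is a contiguous part of the stream that is initially blank and eventually holds some string $u$, then $u$ is determined by its length together with, for each pass, the two memory configurations of the machine as it enters $R$ and as it leaves $R$. The proof is the symmetric substitution argument: if some distinct $u'$ of the same length were consistent with exactly the same boundary configurations on every pass, then we could swap $u'$ into $R$ during the execution without disturbing anything outside $R$, and the decoder would finish having written $u'$ instead of $u$, violating correctness since the grammar uniquely specifies $s$.

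I would then take $R$ to be a window of length $\ell$ whose final contents are a full copy of $t$; such a window exists once $n$ is large enough that the compressed grammar fits comfortably to the left of it. The decoding analogue gives a description of $t$ in \(2 m p + \Oh{\log \ell}\) bits, whereas the incompressibility of $t$ forces any description to have length $\Omega(\ell) = \Omega(m p \log n)$. Combining these yields \(m p \log n = \Oh{m p}\), a contradiction.

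The main obstacle, as in Theorem~\ref{thm:grammar-based}, is formalizing the model precisely enough for the substitution argument to go through: one has to fix a definition of memory configuration that includes head position and pass number, verify that the machine's trajectory inside $R$ is fully determined by the boundary configurations and the current contents of $R$, and check that a local swap in $R$ really leaves all crossings into and out of $R$ unchanged across all $p$ passes. Once that bookkeeping is in place, the information-theoretic count closes the argument exactly as for encoding.
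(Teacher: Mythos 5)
Your proposal matches the paper's own argument: the paper likewise considers a part of the stream that is initially blank and eventually holds a copy of $t$, notes that $t$ is then computable from the per-pass boundary memory configurations (a symmetric version of Lemma~\ref{lem:substring}), and concludes that $m p$ must be at least $t$'s complexity, contradicting the incompressibility of a repeated substring of length asymptotically larger than $m p$. The only cosmetic difference is that for a blank region the determination of $t$ follows directly from determinism rather than from a substitution step, but the information-theoretic count is identical.
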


Theorem~\ref{thm:grammar-based} also has the following corollary, which may be of independent interest.

\begin{corollary} \label{cor:find period-}
With one stream, we cannot find strings' minimum periods.
\end{corollary}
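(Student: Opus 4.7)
The plan is a reduction from period-finding to grammar-based compression, so that a hypothetical period-finding algorithm would contradict Theorem~\ref{thm:grammar-based}. Suppose, for contradiction, that there is a one-stream algorithm $A$ using $\log^{O(1)} n$ bits of memory and $\log^{O(1)} n$ passes that returns the minimum period of every input string. From $A$ I would build a one-stream grammar-based compressor and evaluate it on the adversarial string of Theorem~\ref{thm:grammar-based}.

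The compressor runs in two stages. First it runs $A$ to learn the minimum period $\ell$ of $s$, while a parallel counter tracks $n$; this stage costs only polylog memory and passes. Then, in one additional pass over the stream, it emits the grammar of Lemma~\ref{lem:small grammar}: the structural productions $S_0 \to S_1 S_3$ and $X \to S_2$, together with the $O(\log n)$ productions of the unary grammar $G_1$ encoding $X^{\lfloor n/\ell \rfloor}$, are computed from $\ell$ and $n$ alone and written out without further stream access; the productions $S_2 \to t$ and $S_3 \to t'$ are streamed out character-by-character while the head traverses the first $\ell$ positions and the final $n \bmod \ell$ positions, with intermediate positions ignored. The whole procedure uses polylog memory and passes and produces $O(\ell \log \sigma + \log n \log \log n)$ bits, matching Lemma~\ref{lem:small grammar}. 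Instantiated on the binary $s$ of Theorem~\ref{thm:grammar-based} --- period $mp \log n$, repeated block not compressible by more than a constant factor --- this output is polylog in $n$ and hence within a constant factor of $|\mathsf{OPT}|$, certainly $|\mathsf{OPT}|^{O(1)}$, contradicting Theorem~\ref{thm:grammar-based}.

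The main obstacle I would anticipate is guaranteeing that the minimum period of the adversarial $s$ really is $mp \log n$, so that $A$ returns the value needed by the second stage. This reduces to checking that the repeated block $t$ is aperiodic: any period $d < mp\log n$ of $s$ would also be a period of the prefix $t$, giving a description of $t$ in $d + O(\log d)$ bits and contradicting the incompressibility of $t$ beyond constants. Everything else --- sequencing the two stages within a polylog memory and pass budget, and emitting the structural productions at the right moments --- is routine bookkeeping.
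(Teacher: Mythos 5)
Your proposal is correct and follows essentially the same route as the paper: run the hypothetical period-finder on the adversarial string of Theorem~\ref{thm:grammar-based} and use the resulting period to emit a polylogarithmic-size encoding of $s$ in one extra pass, contradicting the $\Omega(n)$ output bound forced by Lemma~\ref{lem:substring}. The paper simply writes $n$ and one copy of $t$ rather than the full grammar of Lemma~\ref{lem:small grammar}, and your worry about the minimum period being exactly $mp\log n$ is unnecessary, since whatever the minimum period is (it is at most $mp\log n$, which is itself a period), writing one copy of the corresponding repeated block still gives a $\log^{\mathcal{O}(1)} n$-bit encoding.
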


\begin{proof}
Consider the proof of Theorem~\ref{thm:grammar-based}.  Notice that, if we could find $s$'s minimum period, then we could store $s$ in \(\log^{\mathcal{O} (1)} n\) bits by writing $n$ and one copy of its repeated substring $t$.  It follows that we cannot find strings' minimum periods. \qed
\end{proof}

Corollary~\ref{cor:find period-} may at first seem to contradict work by Erg{\"u}n, Muthukrishnan and Sahinalp~\cite{EMS04}, who gave streaming algorithms for determining approximate periodicity.  Whereas we are concerned with strings which are truly periodic, however, they were concerned with strings in which the copies of the repeated substring can differ to some extent.  To see why this is an important difference, consider the simple case of checking whether $s$ has period \(n / 2\) (i.e., whether or not it is a square).  Suppose we know the two halves of $s$ are either identical or differ in exactly one position, and we want to determine whether $s$ truly has period \(n / 2\); then we must compare each corresponding pair of characters and, by a crossing-sequences argument (see, e.g.,~\cite{MP80} for details of a similar argument), this takes \(\Omega (n / m)\) passes.  Now suppose we care only whether the two halves of $s$ match only in nearly all positions; then we need compare only a few randomly chosen pairs to decide correctly with high probability.

\begin{theorem} \label{thm:check period-}
With one stream, we cannot even check strings' minimum periods.
\end{theorem}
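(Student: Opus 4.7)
The plan is to reduce from string equality under an aperiodicity promise, adapting the crossing-sequences idea sketched just before the theorem. Given $s = xy$ of length $n$ with $|x| = |y| = n/2$, I would observe that if $x$ has no period $p$ with $1 \le p < n/2$ --- call such $x$ \emph{strongly aperiodic} --- then $n/2$ is the minimum period of $xy$ if and only if $x = y$. For $x = y$, $xx$ trivially has period $n/2$ and any smaller period would restrict to a period of $x$, contradicting aperiodicity; for $x \neq y$, $xy$ fails to have period $n/2$ outright, and the same restriction argument rules out any smaller period, so the minimum period exceeds $n/2$. Hence a single-stream algorithm that checks minimum periods also solves equality of the two halves of $s$ under the strong-aperiodicity promise on the first half.

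Next I would argue that this promise version of equality still forces $m p = \Omega(n)$ on one stream, which exceeds $\log^{O(1)} n$. Since the number of binary strings of length $n/2$ with any period less than $n/2$ is at most $\sum_{p = 1}^{n/2 - 1} 2^p < 2^{n/2}$, a $1 - o(1)$ fraction of first halves are strongly aperiodic; in particular, the classical hard family for one-stream equality --- pairs $(xx, xx')$ with $x'$ differing from $x$ by a single bit flip --- still contains $\Omega(2^{n/2} \cdot n)$ instances satisfying the promise. A standard \cite{MP80}-style crossing-sequences argument at the midpoint boundary, essentially an application of Lemma~\ref{lem:substring} to the suffix $y$ of $s$, then forces $m p = \Omega(n)$ on this restricted family, contradicting $m p = \log^{O(1)} n$.

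The main obstacle is the crossing-sequences bookkeeping in the second step. To extract $\Omega(n)$ rather than merely $\Omega(\log n)$, one must reason jointly about all $2^{n/2}(1 - o(1))$ strongly aperiodic first halves paired with their single-bit perturbations: the midpoint signature (memory configurations plus second-half output) must injectively encode the second half given the first, and hence must carry at least $(n/2)(1 - o(1))$ bits of information on average. I expect this to follow by the same reasoning used in the one-disk sorting lower bound of \cite{MP80}, after verifying that the strong-aperiodicity restriction loses only a $o(1)$ fraction of the worst-case inputs and hence does not soften the lower bound.
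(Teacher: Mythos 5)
Your overall strategy coincides with the paper's: the paper's entire justification for this theorem is the discussion immediately preceding it, namely that deciding whether $s$ is a square under the promise that its two halves are identical or differ in exactly one position forces a comparison of every corresponding pair of characters, and a crossing-sequences argument at the midpoint in the style of \cite{MP80} then gives $\Omega(n/m)$ passes, hence \(m p = \Omega(n)\). Your aperiodicity promise on the first half is a worthwhile refinement, since it pins down that the \emph{minimum} period equals $n/2$ exactly when the halves are equal --- a point the paper glosses over.

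There is, however, a concrete error in your counting step. The sum $\sum_{p=1}^{n/2-1} 2^p$ equals $2^{n/2}-2$, so the inequality you cite is vacuous and cannot show that most first halves are strongly aperiodic; in fact the claim of a $1-o(1)$ fraction is false. Every binary string of length $m$ whose first and last characters agree has period $m-1$, so at least half of all strings fail your promise; the strings with no period $p<m$ are exactly the unbordered words, which form only a constant fraction (roughly $0.27$) of all binary strings. Fortunately a constant fraction suffices: the fooling-set form of the crossing-sequence argument needs only $\Omega(2^{n/2})$ distinct unbordered $x$, because if $xx$ and $zz$ with $x \neq z$ induced the same sequence of midpoint memory configurations, the hybrid $xz$ would receive the same (accepting) answer, a contradiction; this forces $m p = \Omega(n)$. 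Relatedly, be careful invoking Lemma~\ref{lem:substring} here: it reconstructs a substring from the \emph{output} the machine emits while over it, which is the right tool for the compression lower bounds but not for a decision problem whose output is a single bit. For checking periods you need the cut-and-paste argument just described rather than the lemma itself; with that substitution and the corrected density claim, your proof goes through and matches the paper's intended argument.
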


In the conference version of this paper we left as an open problem proving whether or not multiple streams are useful for grammar-based compression.  As we noted in the introduction, in a subsequent paper with Gawrychowski~\cite{GG10} we showed that with constant memory and logarithmic passes over a constant number of streams, we can approximate the smallest grammar with \(|\mathsf{APPROX}| = \Oh{|\mathsf{OPT}|^2}\), answering our question affirmatively.

\section{Entropy-only bounds} \label{sec:entropy-only}

Kosaraju and Manzini~\cite{KM99} pointed out that proving an algorithm universal does not necessarily tell us much about how it behaves on low-entropy strings.  In other words, showing that an algorithm encodes $s$ in \(n H_k (s) + o (n)\) bits is not very informative when \(n H_k (s) = o (n)\).  For example, although the well-known LZ78 compression algorithm~\cite{ZL78} is universal, \(|\mathsf{LZ78} (1^n)| = \Omega (\sqrt{n})\) while \(n H_0 (1^n) = 0\).  To analyze how algorithms perform on low-entropy strings, we would like to get rid of the \(o (n)\) term and prove bounds that depend only on \(n H_k (s)\).  Unfortunately, this is impossible since, as the example above shows, even \(n H_0 (s)\) can be 0 for arbitrarily long strings.

It is not hard to show that only unary strings have \(H_0 (s) = 0\).  For \(k \geq 1\), recall that \(H_k (s) = (1 / n) \sum_{|w| = k} |w_s| H_0 (w_s)\).  Therefore, \(H_k (s) = 0\) if and only if each distinct $k$-tuple $w$ in $s$ is always followed by the same distinct character.  This is because, if a $w$ is always followed by the same distinct character, then $w_s$ is unary, \(H_0 (w_s) = 0\) and $w$ contributes nothing to the sum in the formula.  Manzini~\cite{Man01} defined the $k$th-order modified empirical entropy \(H_k^* (s)\) such that each context $w$ contributes at least \(\lfloor \log |w_s| \rfloor + 1\) to the sum.  Because modified empirical entropy is more complicated than empirical entropy --- e.g., it allows for variable-length contexts --- we refer readers to Manzini's paper for the full definition.  In our proofs in this paper, we use only the fact that
\[n H_k (s) \leq n H_k^* (s) \leq n H_k (s) + \Oh{\sigma^k \log n}\,.\]

Manzini showed that, for some algorithms and all $k$ simultaneously, it is possible to bound the encoding's length in terms of only \(n H_k^* (s)\) and a constant $g_k$ that depends only on $\sigma$ and $k$; he called such bounds `entropy-only'.  In particular, he showed that an algorithm based on the Burrows-Wheeler Transform (BWT)~\cite{BW94} stores any string $s$ in at most \((5 + \epsilon) n H_k^* (s) + \log n + g_k\) bits for all $k$ simultaneously (since \(n H_k^* (s) \geq \log (n - k)\), we could remove the \(\log n\) term by adding 1 to the coefficient \(5 + \epsilon\)).

\begin{theorem}[Manzini, 2001] \label{thm:Man01}
Using the BWT, move-to-front coding, run-length coding and arithmetic coding, we can achieve an entropy-only bound.
\end{theorem}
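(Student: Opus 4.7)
The plan is to follow Manzini's original analysis and decompose the cost of the pipeline along the natural partition induced by the BWT. First I would recall the key structural property of the Burrows-Wheeler Transform: for every fixed $k \geq 0$, the string \(\mathsf{bwt}(s)\) can be written, up to a boundary correction of \Oh{k} characters, as a concatenation \(\hat w_1 \hat w_2 \cdots \hat w_t\) in which there is one block \(\hat w\) per distinct $k$-tuple $w$ occurring in $s$, and \(\hat w\) is a permutation of the string $w_s$ defined in Section~\ref{sec:universal}; the total length of these blocks is therefore $n - \Oh{k}$ and $t \leq \sigma^k$.

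Next I would prove a local-compression lemma: the MTF+RLE+arithmetic pipeline applied to a concatenation \(t_1 t_2 \cdots t_r\) outputs at most \(\sum_i C(t_i) + \Oh{r \sigma \log \sigma}\) bits, where \(C(t_i)\) is its output length when run on $t_i$ in isolation with a fresh MTF list and arithmetic-coder state. The reason is simply that the MTF list is a permutation of $\sigma$ symbols and flushing the arithmetic coder costs only a constant per block, so transmitting the state at each of the \(r - 1\) boundaries adds only \Oh{\sigma \log \sigma} bits.

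The heart of the proof, and the main obstacle, is the per-block bound
\[C(t) \leq (5+\epsilon)\,|t|\,H_0^*(t) + \Oh{\log |t| + \sigma \log \sigma}\,.\]
This is the delicate ingredient of Manzini's argument, and where the constant $5$ originates. One separates the MTF output into runs of zeros and non-zero symbols; run-length coding reduces the zero-runs to essentially their $H_0$ cost, whereas for the non-zero symbols one bounds the arithmetic coder's output by roughly \(5 H_0^*(t) + \Oh{1}\) per character via a case analysis that controls the low-entropy regime (where the unmodified $H_0$ would collapse to zero) using the \(\lfloor \log |t| \rfloor + 1\) charge built into $H_0^*$. I would invoke Manzini's calculation at this step rather than reproducing it, since the case analysis does not admit much shortening.

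Finally I would sum the per-block bound over all $k$-tuples and use the identity \(n H_k^*(s) = \sum_{|w|=k} |w_s|\, H_0^*(w_s)\) together with $t \leq \sigma^k$, obtaining total output of at most
\[(5+\epsilon)\, n\, H_k^*(s) + \Oh{\sigma^k (\log n + \sigma \log \sigma)}\,.\]
The big-O error depends only on $\sigma$ and $k$, so it can be absorbed into a constant $g_k$; together with an additional \(\log n\) term needed to encode the position of the BWT sentinel this yields the entropy-only bound \((5+\epsilon)\, n\, H_k^*(s) + \log n + g_k\) claimed just before the theorem, and it holds for all $k$ simultaneously because every step above is parametric in $k$.
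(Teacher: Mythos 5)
The paper does not prove this theorem at all: it is imported verbatim from Manzini's paper with a citation, and the author explicitly refers readers there even for the full definition of $H_k^*$. Your outline is a faithful reconstruction of Manzini's actual argument --- the context-wise decomposition of the BWT, the locality of MTF across block boundaries, and the per-block order-zero analysis where the constant $5$ arises --- and it is reasonable to invoke Manzini's case analysis for the last of these rather than reproduce it.

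Two points in your final accounting need repair, though. First, you claim the error term \(\Oh{\sigma^k (\log n + \sigma \log \sigma)}\) ``depends only on $\sigma$ and $k$'' and can be absorbed into $g_k$; it cannot, because of the $\log n$ factor. An entropy-only bound tolerates no additive term growing with $n$ beyond the single $\log n$ the paper mentions. The fix is that the \(\Oh{\log |t|}\) in your per-block bound must be absorbed into the entropy term itself, using the fact that \(|t|\,H_0^*(t) \geq \lfloor \log |t| \rfloor + 1\) by the very definition of the modified entropy --- this is precisely why Manzini introduces $H_0^*$ --- so it only inflates the coefficient, not the constant. Second, \(n H_k^*(s) = \sum_{|w|=k} |w_s| H_0^*(w_s)\) is not an identity: Manzini's $H_k^*$ minimizes over variable-length prefix-free context covers, so the fixed-$k$ sum is only an upper bound for \(n H_k^*(s)\), which is the wrong direction for your last step. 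The repair is to run your decomposition on the optimal cover achieving $H_k^*$ rather than on the fixed-$k$ partition; the BWT concatenation property and the MTF locality lemma both hold for any prefix-free cover, so the argument goes through unchanged.
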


\noindent The BWT sorts the characters in a string into the lexicographical order of the suffixes that immediately follow them.  When using the BWT for compression, it is customary to append a special character \$ that is lexicographically less than any in the alphabet.  For a more thorough description of the BWT, we again refer readers to Manzini's paper.  In this section we first show how we can compute and invert the BWT with two streams and, thus, achieve entropy-only bounds.  We then show that we cannot achieve entropy-only bounds with only one stream.  In other words, two streams are necessary and sufficient for us to achieve entropy-only bounds.

One of the most common ways to compute the BWT is by building a suffix array.  In his PhD thesis, Ruhl introduced the StreamSort model~\cite{Ruh03,ADRR04}, which is similar to the read/write streams model with one stream, except that it has an extra primitive that sorts the stream in one pass.  Among other things, he showed how to build a suffix array efficiently in this model.

\begin{theorem}[Ruhl, 2003] \label{thm:Ruh03}
In the StreamSort model, we can build a suffix array using $\Oh{\log n}$ bits of memory and $\Oh{\log n}$ passes.
\end{theorem}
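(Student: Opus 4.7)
The plan is to use the classical prefix-doubling construction of the suffix array (in the style of Manber--Myers), showing that each doubling round can be implemented with a constant number of passes in StreamSort and using only $\Oh{\log n}$ bits of working memory. After $\Oh{\log n}$ rounds every suffix is uniquely ranked, which gives the suffix array.

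First I would set up the invariant: after round $k$, the stream holds tuples $(i, r_k[i])$ in order of $i$, where $r_k[i]$ is the rank of the length-$2^k$ prefix of the suffix starting at position $i$, with ties broken arbitrarily. The base case $k = 0$ is obtained by one pass that emits $(i, s[i])$ for each $i$, followed by one sort on $s[i]$ and one pass that overwrites $s[i]$ by its rank among the characters; then a final sort by $i$ restores position order. This is $\Oh{1}$ passes and clearly needs only $\Oh{\log n}$ memory since we only carry a running counter and the previous tuple.

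Next I would implement a single doubling round, which is the main step. Given the stream of $(i, r_k[i])$, we need to produce $(i, (r_k[i], r_k[i + 2^k]))$, convert these pairs to ranks to form $r_{k+1}[i]$, and return the result sorted by $i$. The trick is to emit, for each input tuple $(i, r_k[i])$, two records: a ``home'' record $(i, r_k[i], \mathrm{H})$ and an ``offset'' record $(i - 2^k, r_k[i], \mathrm{O})$ (records with negative keys are discarded or treated as the sentinel). One StreamSort sort on the first coordinate places the H- and O-records for the same position adjacent; one sequential pass merges each adjacent pair into $(i, r_k[i], r_k[i+2^k])$. A second StreamSort sorts these triples lexicographically on $(r_k[i], r_k[i+2^k])$; one more pass scans the sorted stream, assigns each distinct pair a fresh consecutive rank $r_{k+1}[i]$, and emits $(i, r_{k+1}[i])$. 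A final sort by $i$ restores the invariant. Every pass here maintains only a constant number of $\Oh{\log n}$-bit tuples in memory, so the memory bound holds, and each round uses $\Oh{1}$ sorts/passes.

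Finally, I would argue termination: doubling the prefix length gives unique ranks after at most $\lceil \log n \rceil + 1$ rounds, which a running pass can detect (e.g., by checking during the rank-assignment pass whether all pairs are distinct and carrying a single ``done?'' flag). Multiplying $\Oh{\log n}$ rounds by $\Oh{1}$ passes per round yields the claimed $\Oh{\log n}$ passes, and the memory never exceeds a constant number of $\Oh{\log n}$-bit words. The main obstacle in this proposal is the ``look $2^k$ ahead'' step, since StreamSort forbids random access; the key idea for overcoming it is the duplicate-and-shift trick above that turns a positional lookup into two sorts plus two sequential passes, so that after $\Oh{\log n}$ doublings the invariant yields exactly the suffix array.
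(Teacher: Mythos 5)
The paper states this theorem as a citation to Ruhl's thesis and gives no proof of its own; your prefix-doubling construction, with the duplicate-and-shift trick turning the ``look $2^k$ ahead'' lookup into a sort plus a merge pass, is essentially Ruhl's own argument and is correct in the StreamSort model. One small wording fix: in your invariant, equal length-$2^k$ prefixes must receive \emph{equal} ranks (which your rank-assignment pass in fact ensures), rather than ``ties broken arbitrarily,'' since otherwise the next round would refine on arbitrary tie-breaks instead of actual suffix content.
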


\begin{corollary} \label{cor:BWT+}
With two streams, we can compute the BWT using $\Oh{\log n}$ bits of memory and $\Oh{\log^2 n}$ passes.
\end{corollary}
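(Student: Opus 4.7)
The plan is to simulate one execution of the StreamSort model on two read/write streams, and then apply Theorem~\ref{thm:Ruh03}. A StreamSort computation consists of $\Oh{\log n}$ scans over its single virtual stream interleaved with $\Oh{\log n}$ invocations of the sort primitive. If I can realize one sort with $\Oh{\log n}$ passes and $\Oh{\log n}$ bits of memory on two streams, then the whole suffix-array construction costs $\Oh{\log^2 n}$ passes and $\Oh{\log n}$ bits, after which a constant number of extra sort/scan steps converts the suffix array into the BWT.

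To sort with two streams I would use classical two-tape external merge sort. In a first pass, break stream~1 into consecutive pairs of records, compare them in memory, and write them in sorted order to stream~2, producing $\lceil n/2 \rceil$ runs of length~2. In each subsequent pass, read pairs of adjacent sorted runs of length $2^j$ from whichever stream currently holds the data, merge them with two $\Oh{\log n}$-bit cursors, and write the resulting runs of length $2^{j+1}$ to the other stream, swapping the roles of the two streams. After $\lceil \log n \rceil$ doublings the entire stream is sorted. Memory holds only a handful of record keys and counters of $\Oh{\log n}$ bits each, so the total memory is $\Oh{\log n}$ bits and the total number of passes is $\Oh{\log n}$.

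Once Ruhl's algorithm has left the suffix array $\mathrm{SA}$ on one of the streams, I would compute the BWT by a standard join of $\mathrm{SA}$ with $s$. Emit pairs $(i, \mathrm{SA}[i] - 1)$; sort them by the second coordinate using the merge sort above; scan the result synchronously with $s$ (viewed as pairs $(j, s[j])$) to attach each character $s[\mathrm{SA}[i]-1]$, producing triples $(i, \mathrm{SA}[i] - 1, s[\mathrm{SA}[i]-1])$; and finally sort these by $i$ and project onto the third coordinate to obtain the BWT in order, handling the appended \$ by the usual convention. This step performs a constant number of sorts and scans, contributing only $\Oh{\log n}$ additional passes, which is absorbed by the $\Oh{\log^2 n}$ bound.

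The main obstacle is bookkeeping: checking that between its sort primitives Ruhl's algorithm accesses its virtual stream only by sequential scans (which we emulate by reading whichever physical stream currently holds the data) and that every record manipulated along the way, in Ruhl's algorithm and in the BWT post-processing, stays of width $\Oh{\log n}$ bits, so that the few keys and pointers cached by the merge routine fit within the $\Oh{\log n}$-bit memory budget.
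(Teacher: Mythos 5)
Your proposal is correct and takes essentially the same route as the paper: compute the BWT in Ruhl's StreamSort model (build the suffix array of $s\$$ and then recover the preceding characters by a sort-based join), and simulate each sort primitive by a two-stream merge sort with $\Oh{\log n}$ passes and $\Oh{\log n}$ bits of memory, giving $\Oh{\log^2 n}$ passes overall. One small caveat in your merge step: each stream has a single sequential head, so you cannot keep two cursors on adjacent runs of the \emph{same} stream; the standard fix is to spend one extra pass per phase distributing alternate runs to the other stream before merging them back, which leaves all the asymptotics unchanged.
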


\begin{proof}
We can compute the BWT in the StreamSort model by appending \$ to $s$, building a suffix array, and replacing each value $i$ in the array by the \((i - 1)\)st character in $s$ (replacing either 0 or 1 by \$, depending on where we start counting).  This takes $\Oh{\log n}$ bits of memory and $\Oh{\log n}$ passes.  Since we can sort with two streams using $\Oh{\log n}$ bits memory and $\Oh{\log n}$ passes (see, e.g.,~\cite{Sch07}), it follows that we can compute the BWT using $\Oh{\log n}$ bits of memory and $\Oh{\log^2 n}$ passes. \qed
\end{proof}

We note as an aside that, once we have the suffix array for a periodic string, we can easily find its minimum period.  To see why, suppose $s$ has minimum period $\ell$, and consider the suffix $u$ of $s$ that starts in position \(\ell + 1\).  The longest common prefix of $s$ and $u$ has length \(n - \ell\), which is maximum; if another suffix $v$ shared a longer common prefix with $s$, then $s$ would have period \(n - |v| < \ell\).  It follows that, if the first position in the suffix array contains $i$, then the \((\ell + 1)\)st position contains \(i - 1\) (assuming $s$ terminates with \$, so $u$ is lexicographically less than $s$).  With two streams we can easily find the position \(\ell + 1\) that contains \(i - 1\) and then check that $s$ is indeed periodic with period $\ell$.

\begin{corollary} \label{cor:period+}
With two streams, we can compute a string's minimum period using $\Oh{\log n}$ bits and $\Oh{\log^2 n}$ passes.
\end{corollary}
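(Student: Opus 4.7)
The plan is to compute the suffix array of $s\$$ via Corollary~\ref{cor:BWT+} and then extract the minimum period from it in a constant number of additional passes, exploiting the structural observation made in the paragraph preceding the statement.

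The key fact is the following. If $s$ has minimum period $\ell < n$, then the suffix $u = s_{\ell+1} \cdots s_n$ is the only suffix besides $s$ itself whose longest common prefix with $s$ has length at least $n - \ell$. Indeed, any suffix $v = s_q \cdots s_n$ matching $s$ on the first $n - \ell$ characters forces $s_i = s_{q+i-1}$ for $i = 1, \ldots, n - \ell$, i.e., a period of $q - 1$; together with the bound $|v| \ge n - \ell$ (equivalently $q \le \ell + 1$), this constrains $q \in \{1, \ldots, \ell + 1\}$, and minimality of $\ell$ rules out $2 \le q \le \ell$, leaving $q = 1$ (giving $v = s$) or $q = \ell + 1$ (giving $v = u$). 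Since $u$ is a proper prefix of $s$ in its first $n - \ell$ characters and is followed by the lexicographically smallest sentinel $\$$, we have $u < s$, so $u$ is the suffix immediately preceding $s$ in the sorted order of suffixes of $s\$$. Hence in the suffix array $SA$ of $s\$$, the unique index $r$ with $SA[r] = 1$ satisfies $SA[r-1] = \ell + 1$.

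The algorithm proceeds in three phases. Phase~1 builds $SA$ of $s\$$ via Corollary~\ref{cor:BWT+} in $\Oh{\log^2 n}$ passes and $\Oh{\log n}$ bits, while arranging that a clean copy of $s$ remains accessible on one of the two streams. Phase~2 scans $SA$ once to find the index $r$ with $SA[r] = 1$, caching the previous entry $p$; it then sets $\ell := p - 1$ as the candidate period. Phase~3 verifies that $s$ really has period $\ell$: place copies of $s$ on the two streams with one head advanced by $\ell$ positions relative to the other, and scan in lockstep, checking $s_i = s_{i+\ell}$ for $i = 1, \ldots, n - \ell$; this takes $\Oh{1}$ passes and $\Oh{\log n}$ bits. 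If verification succeeds, output $\ell$; otherwise the uniqueness argument above implies that $s$ has no period shorter than $n$, and we output $n$.

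The main obstacle I anticipate is logistic rather than combinatorial: because the StreamSort simulation underlying Corollary~\ref{cor:BWT+} rearranges the contents of both streams, some care is needed to schedule Phase~1 so that $SA$ and an intact copy of $s$ coexist on the two streams at its end, ready for the Phase~3 verification. This can be arranged at constant-factor overhead in passes, well within the $\Oh{\log^2 n}$ bound.
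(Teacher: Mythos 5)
Your algorithm is the paper's: build the suffix array of $s\$$ with the two-stream sort underlying Corollary~\ref{cor:BWT+}, read off the entry adjacent to the suffix $s\$$ itself to get a candidate period, and verify the candidate with a lockstep scan over two shifted copies of $s$. So the approach matches; the problem is your ``key fact,'' whose justification contains a non sequitur and which is in fact false for general strings. From ``$v$ matches $s$ on the first $n-\ell$ characters'' you conclude that $q-1$ is a period of $s$, but that match only gives $s_i=s_{q+i-1}$ for $i\le n-\ell$, whereas a period of $q-1$ requires this for all $i\le n-(q-1)$ --- a strictly longer range precisely in the interesting case $q-1<\ell$. Concretely, take $s=aabaaa$, whose minimum period is $4$: the suffix starting at position $4$, namely $aaa\$$, ties with $u\$=aa\$$ for the longest common prefix with $s$ (both have length $2=n-\ell$) and, because of the sentinel, sorts \emph{between} $aa\$$ and $aabaaa\$$. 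Your Phase~2 therefore returns the candidate $3$, Phase~3's verification fails, and your fallback outputs $n=6$ rather than the correct answer $4$.

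The adjacency claim is salvageable when $s$ is genuinely periodic, i.e.\ when $\ell\le n/2$, but closing the gap there needs the Fine--Wilf periodicity lemma rather than the one-line implication you give: a match of length at least $n-\ell$ starting at position $q\le\ell$ yields a prefix of $s$ of length at least $\ell+q-1$ carrying both periods $q-1$ and $\ell$, hence period $\gcd(q-1,\ell)<\ell$, which propagates to all of $s$ and contradicts minimality. To be fair, the aside preceding the corollary in the paper makes essentially the same leap (``then $s$ would have period $n-|v|$'') and is open to the same objection when $\ell>n/2$; but your write-up states the uniqueness claim for arbitrary strings and leans on it to justify outputting $n$ whenever verification fails, so in your version the gap is load-bearing. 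Either restrict the claim to $\ell\le n/2$ and invoke Fine--Wilf (which suffices for how the paper uses periodicity elsewhere), or compute the minimum period as $n$ minus the length of the longest suffix of $s$ that the suffix array certifies as a prefix of $s$, rather than trusting the single neighbour of $s\$$.
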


Now suppose we are given a permutation $\pi$ on \(n + 1\) elements as a list \(\pi (1), \ldots, \pi (n + 1)\), and asked to rank it, i.e., to compute the list \(\pi^0 (1), \ldots, \pi^n (1)\).  This problem is a special case of list ranking (see, e.g.,~\cite{ABD+07}) and has a surprisingly long history.  For example, Knuth~\cite[Solution 24]{Knu98} described an algorithm, which he attributed to Hardy, for ranking a permutation with two tapes.  More recently, Bird and Mu~\cite{BM04} showed how to invert the BWT by ranking a permutation.  Therefore, reinterpreting Hardy's result in terms of the read/write streams model gives us the following bounds.

\begin{theorem}[Hardy, c.~1967] \label{thm:Har67}
With two streams, we can rank a permutation using $\Oh{\log n}$ bits of memory and $\Oh{\log^2 n}$ passes.
\end{theorem}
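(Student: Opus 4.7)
The plan is an iterated-doubling argument that maintains, after $k$ rounds, two invariants: one stream holds the prefix $\pi^0(1), \pi^1(1), \ldots, \pi^{2^k-1}(1)$ of the orbit, while the other holds the squared permutation $\pi^{2^k}$ as the sorted list of pairs $(i, \pi^{2^k}(i))$. Round zero is trivial: the first stream contains just $1$ and the second stream is the input.

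To advance from round $k$ to round $k+1$, I would do two things in sequence. First, extend the orbit by applying $\pi^{2^k}$ to every element currently on the first stream: tag each orbit entry with its position in $\{0, 1, \ldots, 2^k-1\}$, sort the tagged list by the element value, run a merge-style scan against the permutation list to replace each value $x$ by $\pi^{2^k}(x)$, sort back by position, and append the $2^k$ new elements to the tail of the orbit stream. Second, square the permutation stream into $\pi^{2^{k+1}}$ by a self-join: duplicate the list, sort one copy by its second coordinate so both sides of the join are in matching order, scan the two in parallel to emit pairs $(i, \pi^{2^{k+1}}(i)) = (i, \pi^{2^k}(\pi^{2^k}(i)))$, and finally sort by $i$. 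Each sub-step is a constant number of two-stream sorts or merge scans on lists of length at most $n+1$.

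The standard fact that sorting on two streams requires only $\Oh{\log n}$ bits of memory and $\Oh{\log n}$ passes --- the same fact invoked in Corollary~\ref{cor:BWT+} --- bounds the cost of each sub-step. After $\lceil \log(n+1) \rceil$ rounds the orbit stream contains $\pi^0(1), \ldots, \pi^n(1)$, truncating the final round if $n+1$ is not a power of two. Only a constant number of stream pointers and counters (each of value at most $n$) are held in memory at any moment, so the memory cost is $\Oh{\log n}$ bits; with $\Oh{\log n}$ rounds each costing $\Oh{\log n}$ passes, the total is $\Oh{\log^2 n}$ passes.

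The main obstacle will be scheduling the two subroutines --- applying a permutation to a tagged list, and squaring a permutation by a self-join --- entirely within the two-stream model using only $\Oh{\log n}$ bits of working memory. Both reduce to a handful of sorts and single-pass merges, but the merges require simultaneously advancing through two logically distinct sublists while only two physical tapes exist; the bookkeeping trick is to segment or interleave the tapes so the passive data for the next round sits on one tape while the scratch work of the current sort happens on the other, never exceeding $\Oh{\log n}$ bits of internal state.
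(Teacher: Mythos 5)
Your proposal is correct, and it is essentially the algorithm the theorem refers to: the paper gives no proof of its own, deferring to Knuth's account of Hardy's two-tape method, which is exactly this doubling scheme (repeatedly squaring the permutation by sort-and-join while extending the orbit of $1$ by applying the current power to the prefix already computed). Your accounting --- $\Oh{\log n}$ rounds, each a constant number of two-stream sorts costing $\Oh{\log n}$ passes and $\Oh{\log n}$ bits --- matches the stated bounds, and the tape-segmentation issue you flag at the end is routine bookkeeping rather than a gap.
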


\begin{corollary} \label{cor:BWT-}
With two streams, we can invert the BWT using $\Oh{\log n}$ bits of memory and $\Oh{\log^2 n}$ passes.
\end{corollary}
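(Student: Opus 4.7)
The plan is to reduce BWT inversion to permutation ranking, following Bird and Mu~\cite{BM04}, and then invoke Hardy's algorithm (Theorem~\ref{thm:Har67}) for the main work. Recall that if $L$ is the BWT of $s\$$ and $F$ is $L$ sorted stably, then the LF-mapping $\pi$ sending row $i$ of the sorted-rotations matrix to the row where that same occurrence of $L[i]$ sits in $F$ is a permutation of $\{1, \ldots, n+1\}$. If $i_0$ is the row containing the sentinel, then iterating $\pi$ starting from $i_0$ walks through the rotations in the original order and emits $s\$$ one character at a time. Thus inverting the BWT reduces to three subtasks: (i) building the list $\pi(1), \ldots, \pi(n+1)$ from $L$ and locating $i_0$; (ii) ranking $\pi$ from the seed $i_0$ to obtain $\pi^0(i_0), \ldots, \pi^n(i_0)$; (iii) using that ranked sequence to read the characters of $L$ out in the right order.

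For step (i) I would tag each character $L[i]$ with its position $i$ and stably sort the resulting pairs by character on the two streams. Two-stream sorting costs $\Oh{\log n}$ bits and $\Oh{\log n}$ passes, as already invoked in Corollary~\ref{cor:BWT+}, and produces exactly the list of positions defining $\pi$ (after one further alignment sort to turn the sorted-position sequence into the function $i \mapsto \pi(i)$). A single extra pass locates $i_0$ as the row with $L[i_0] = \$$. Step (ii) is exactly Hardy's algorithm applied to $\pi$ with start element $i_0$, which by Theorem~\ref{thm:Har67} takes $\Oh{\log n}$ bits and $\Oh{\log^2 n}$ passes; this step dominates. For step (iii) I would tag each entry of the ranked list with its output position, sort by the entry value so as to align the ranked list with $L$, copy across the corresponding character of $L$, then sort once more by output position to emit $s$ in order; two further two-stream sorts add only $\Oh{\log n}$ passes.

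The main obstacle is clearly step (ii): everything else is routine manipulation via the two-stream sorting primitive, whereas ranking a permutation in place with only $\Oh{\log n}$ bits of memory is nontrivial and is precisely what Hardy's construction supplies off the shelf. Summing the three contributions yields $\Oh{\log^2 n}$ passes and $\Oh{\log n}$ bits of memory, matching the statement of the corollary.
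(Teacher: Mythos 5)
Your proposal is correct and follows essentially the same route as the paper: both reduce inversion to ranking the LF-type permutation derived from the BWT (the paper cites Bird and Mu for exactly this reduction), invoke Theorem~\ref{thm:Har67} as the dominant $\Oh{\log^2 n}$-pass step, and handle the bookkeeping (building the permutation, substituting characters back, aligning to the sentinel) with constant-many auxiliary two-stream sorts or passes. The only cosmetic difference is that you seed the ranking at the sentinel row $i_0$, whereas the paper ranks from element $1$ and rotates the output afterwards.
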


\begin{proof}
The BWT has the property that, if a character is the $i$th in \(\mathsf{BWT} (s)\), then its successor in $s$ is the lexicographically $i$th in \(\mathsf{BWT} (s)\) (breaking ties by order of appearance).  Therefore, we can invert the BWT by replacing each character by its lexicographic rank, ranking the resulting permutation, replacing each value $i$ by the $i$th character of \(\mathsf{BWT} (s)\), and rotating the string until \$ is at the end.  This takes $\Oh{\log n}$ memory and $\Oh{\log^2 n}$ passes. \qed
\end{proof}

Since we can compute and invert move-to-front, run-length and arithmetic coding using $\Oh{\log n}$ bits of memory and $\Oh{1}$ passes over one stream, by combining Theorem~\ref{thm:Man01} and Corollaries~\ref{cor:BWT+} and~\ref{cor:BWT-} we obtain the following theorem.

\begin{theorem} \label{thm:entropy-only ubound}
With two streams, we can achieve an entropy-only bound using $\Oh{\log n}$ bits of memory and $\Oh{\log^2 n}$ passes.
\end{theorem}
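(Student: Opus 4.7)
The plan is to pipeline the four transformations that appear in Theorem~\ref{thm:Man01} --- BWT, move-to-front coding, run-length coding and arithmetic coding --- across the two available streams, check that each stage fits within the claimed $\Oh{\log n}$ memory and $\Oh{\log^2 n}$ pass budget, and then invoke Theorem~\ref{thm:Man01} on the final output to get the entropy-only bound for free.

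First I would apply Corollary~\ref{cor:BWT+} to replace $s$ on the streams by $\mathsf{BWT}(s)$; this is the dominant stage, costing $\Oh{\log n}$ bits of memory and $\Oh{\log^2 n}$ passes, and it leaves $\mathsf{BWT}(s)$ written on one of the two streams so that the remaining stages can read it left-to-right. Next I would run move-to-front coding in a single pass, reading from that stream and writing to the other: the coder maintains only the current permutation of the $\sigma$-symbol alphabet, which fits in $\Oh{\sigma \log \sigma}$ bits (and $\Oh{1}$ bits for the constant $\sigma$ assumed throughout). Run-length coding is then another single-pass transformation that needs only a current symbol and a counter, i.e.\ $\Oh{\log n}$ bits. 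Finally, arithmetic coding is a single-pass transformation that needs only to track the endpoints of its current interval, again $\Oh{\log n}$ bits. By alternating the read stream and the write stream between these three coders, two streams suffice as scratch space and nothing has to be held in memory between stages.

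Adding up the resources, the BWT stage contributes $\Oh{\log^2 n}$ passes and $\Oh{\log n}$ memory, and each of the three subsequent coders contributes $\Oh{1}$ passes and $\Oh{\log n}$ memory, so the totals remain $\Oh{\log n}$ bits of memory and $\Oh{\log^2 n}$ passes. The composed output is exactly the encoding to which Theorem~\ref{thm:Man01} applies, so it satisfies an entropy-only bound.

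The main obstacle is bookkeeping rather than anything mathematically subtle: we must check that the post-BWT coders genuinely run in a single pass with $\Oh{\log n}$ memory on one stream (which is standard, and is in fact asserted in the paragraph preceding the theorem) and that chaining them via the second stream as a buffer does not silently blow up either resource. Once that is verified, the theorem follows immediately from Theorem~\ref{thm:Man01} and Corollary~\ref{cor:BWT+}; Corollary~\ref{cor:BWT-} is not needed for the upper bound, but it would be used if we also wanted to decode within the same resource budget.
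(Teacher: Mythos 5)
Your proposal is correct and follows essentially the same route as the paper: compute the BWT via Corollary~\ref{cor:BWT+}, then apply move-to-front, run-length and arithmetic coding in $\Oh{1}$ passes each with $\Oh{\log n}$ memory, and invoke Theorem~\ref{thm:Man01} for the entropy-only bound. The paper additionally cites Corollary~\ref{cor:BWT-} so that decompression fits the same resource budget, which you correctly identify as unnecessary for the encoding bound itself.
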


It follows from Theorem~\ref{thm:entropy-only ubound} and a result by Hernich and Schweikardt~\cite{HS08} that we can achieve an entropy-only bound using $\Oh{1}$ bits of memory, $\Oh{\log^3 n}$ passes and four streams.  It follows from their theorem below that, with more streams, we can even reduce the number of passes to $\Oh{\log n}$.

\begin{theorem}[Hernich and Schweikardt, 2008] \label{thm:HS08}
If we can solve a problem with logarithmic work space, then we can solve it using $\Oh{1}$ bits of memory and $\Oh{\log n}$ passes over $\Oh{1}$ streams.
\end{theorem}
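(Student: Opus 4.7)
The plan is to simulate an arbitrary logspace Turing machine $M$ in the read/write streams model. Since $M$ uses $\Oh{\log n}$ work space, each of its configurations---state, both head positions, and work tape contents---fits in $\Oh{\log n}$ bits, so there are only $n^{\Oh{1}}$ reachable configurations on an input of length $n$, and $M$ halts within that many steps. A naive step-by-step simulation would take polynomially many passes, which is far too many; instead, I would reduce the problem to list ranking, in the same spirit as Corollary~\ref{cor:BWT-} inverts the BWT via Theorem~\ref{thm:Har67}.

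Concretely, I would first enumerate all possible configurations in lexicographic order on a work stream; this is routine, since configurations are short strings that can be produced by an obvious $\Oh{1}$-memory process. Next, for each configuration $c$ I would compute its deterministic successor $\delta (c)$ and store the pair $(c, \delta (c))$ on a stream. The subtlety is that $\delta (c)$ depends on the input symbol under $c$'s input head, which I would handle by sorting the configurations by input head position, making one pass over the input to annotate each $c$ with the relevant symbol, and then sorting back into lexicographic order; each such sort can be implemented as a streaming merge sort on $\Oh{1}$ streams. The execution trajectory $c_0, \delta (c_0), \delta^2 (c_0), \ldots$ ending at a halting configuration is then exactly the kind of list that an adaptation of Hardy's algorithm (Theorem~\ref{thm:Har67}) extracts in $\Oh{\log n}$ passes by repeated pointer doubling on the successor table. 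One final pass over this trajectory emits $M$'s output on the designated output stream.

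The main obstacle will be honoring the $\Oh{1}$ memory bound. Merge sort on streams as usually described uses $\Theta (\log n)$ bits for block indices and merging counters, and Hardy's ranking algorithm is likewise typically stated with $\Oh{\log n}$ bits of working memory; so the standard recipes do not immediately apply. I would overcome this by trading memory for a constant number of extra streams: each $\Oh{\log n}$-bit counter is represented implicitly as the head position on a dedicated auxiliary stream, with ``increment'' and ``test for zero'' realised by advancing or reading a single symbol, and each block-length parameter in the merge sort is similarly stored as a run of marker symbols on its own stream. Since the simulation requires only constantly many such counters, the total number of streams remains $\Oh{1}$. Verifying that configuration enumeration, sorting, successor computation, and pointer-doubling list ranking can all be carried out in lockstep under the $\Oh{1}$ memory bound using only these positional counters is the technical heart of the argument, and is the step I would expect to require the most care.
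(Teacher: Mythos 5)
First, a point of comparison: the paper never proves this statement --- it is imported from Hernich and Schweikardt~\cite{HS08} and used purely as a black box in Corollary~\ref{cor:HS08} --- so there is no in-paper argument to measure yours against; I can only judge the proposal on its own terms. Your overall plan (materialize the $N = n^{\Oh{1}}$ configurations of the logspace machine on a stream, annotate each with its input symbol by sorting on head position, build the successor table, and follow the computation path by pointer jumping) is the natural configuration-graph simulation and is in the right spirit. The genuine gap is quantitative: it does not yield $\Oh{\log n}$ passes. You invoke Theorem~\ref{thm:Har67} as extracting the trajectory ``in $\Oh{\log n}$ passes,'' but that theorem as stated uses $\Oh{\log^2 n}$ passes and $\Oh{\log n}$ bits of memory, both of which already exceed the bounds you must meet. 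The problem is structural, not just a misquotation: pointer doubling needs $\Theta(\log N)$ rounds, and each round is a join of the successor table with itself, i.e., it requires rearranging $N$ records by their second components; with $\Oh{1}$ streams and (near-)constant memory such a sort or retrieval costs $\Theta(\log N)$ further passes, so the simulation totals $\Theta(\log^2 n)$ passes. Compressing this to $\Oh{\log n}$ is precisely the nontrivial content of the reversal-complexity machinery of Chen and Yap~\cite{CY91} that Hernich and Schweikardt build on, and your sketch does not supply it.

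The part of your proposal I would not count as a gap is the $\Oh{1}$-memory bookkeeping: representing each $\Oh{\log n}$-bit counter as a head position on a dedicated auxiliary stream is sound and matches how~\cite{HS08} trade internal space for extra tapes. Just note that this device supplies counters (increment, decrement, test for zero), not a randomly accessible work tape; that suffices here only because the simulated machine's work tape lives on the stream as part of each configuration rather than in internal memory. The decisive missing piece remains the pass bound.
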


\begin{corollary} \label{cor:HS08}
With $\Oh{1}$ streams, we can achieve an entropy-only bound using $\Oh{1}$ bits of memory and $\Oh{\log n}$ passes.
\end{corollary}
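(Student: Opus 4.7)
The plan is to establish Corollary~\ref{cor:HS08} by verifying that the compression scheme underlying Theorem~\ref{thm:entropy-only ubound}---compute the BWT, then apply move-to-front, run-length and arithmetic coding---can be carried out by a deterministic Turing machine in logarithmic work space, at which point Theorem~\ref{thm:HS08} converts it mechanically into the claimed form.

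First I would show that computing the BWT lies in $\mathbf{L}$. The $i$-th symbol of $\mathsf{BWT}(s\$)$ is the predecessor of the $i$-th lexicographically smallest suffix of $s\$$. A log-space machine can iterate a candidate starting position $j \in \{1,\ldots,n+1\}$ and, for each $j$, count how many positions $k$ have a lexicographically smaller suffix; each suffix comparison uses only two position pointers plus a character-by-character scan, and the count fits in one additional $O(\log n)$ counter. As soon as the count reaches $i-1$, emit $s[j-1]$. Running this for $i = 1, \ldots, n+1$ produces $\mathsf{BWT}(s)$ symbol by symbol in $O(\log n)$ bits.

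Next I would observe that, with $\sigma$ fixed, the remaining three stages are clearly in $\mathbf{L}$: move-to-front needs only an $O(1)$-size symbol list; run-length coding needs only a current-symbol register and a counter; and the standard streaming form of arithmetic coding (Witten--Neal--Cleary) needs only a constant-width interval register together with a pending-bits counter of $O(\log n)$ bits. Since $\mathbf{L}$ is closed under composition, the full pipeline $\mathsf{BWT} \to \mathsf{MTF} \to \mathsf{RLE} \to \mathsf{AC}$ is in $\mathbf{L}$, and by Theorem~\ref{thm:Man01} its output already satisfies an entropy-only bound. Invoking Theorem~\ref{thm:HS08} then yields a read/write-streams algorithm with $O(1)$ streams, $O(1)$ bits of memory, and $O(\log n)$ passes that achieves the same bound, proving the corollary.

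The main obstacle I expect is composition, because the intermediate strings have linear length and cannot be materialized in log space. I would handle this by the standard recursive on-demand evaluation: whenever a later stage requests the $i$-th symbol of the previous stage's output, a fresh $O(\log n)$-bit sub-tape recomputes just that symbol, so each of the four stages contributes only a constant factor to the overall work-space. Because the pipeline has constant depth the total space remains $O(\log n)$, and the reduction to Theorem~\ref{thm:HS08} goes through.
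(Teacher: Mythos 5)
Your proposal matches the paper's own proof: the paper likewise shows the BWT is computable in logarithmic work space by looping over all suffixes and counting, for each, how many others are lexicographically smaller, and then invokes Theorem~\ref{thm:HS08} to obtain $\Oh{1}$ bits of memory and $\Oh{\log n}$ passes over $\Oh{1}$ streams. Your additional care about the log-space composition of the MTF/RLE/AC stages is a detail the paper leaves implicit, but the argument is the same.
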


\begin{proof}
To compute the $i$th character of \(\mathsf{BWT} (s)\), we find the $i$th lexicographically largest suffix.  To find this suffix, we loop though all the suffixes and, for each, count how many other suffixes are lexicographically less.  Comparing two suffixes character by character takes $\Oh{n^2}$ time, so we use a total of $\Oh{n^4}$ time; it does not matter now how much time we use, however, just that we need only a constant number of $\Oh{\log n}$-bit counters.  Since we can compute the BWT with logarithmic work space, it follows from Theorem~\ref{thm:HS08} that we can compute it --- and thereby achieve an entropy-only bound --- with $\Oh{1}$ bits of memory and $\Oh{\log n}$ passes over $\Oh{1}$ streams. \qed
\end{proof}

Although we have not been able to prove an \(\Omega (\log n)\) lower bound on the number of passes needed to achieve an entropy-only bound with $\Oh{1}$ streams, we have been able to prove such a bound for computing the BWT.  Our idea is to reduce sorting to the BWT, since Grohe and Schweikardt~\cite{GS05} showed we cannot sort $n$ numbers with \(o (\log n)\) passes over $\Oh{1}$ streams.  It is trivial, of course, to reduce sorting to the BWT if the alphabet is large enough --- e.g., linear in $n$ --- but our reduction is to the more reasonable problem of computing the BWT of a ternary string.

\begin{theorem} \label{thm:reduction}
With $\Oh{1}$ streams, we cannot compute the BWT using \(o (\log n)\) passes.
\end{theorem}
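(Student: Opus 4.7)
The plan is to reduce sorting to computing the BWT of a ternary string. Grohe and Schweikardt showed that sorting $n$ numbers over $\Oh{1}$ streams requires $\Omega(\log n)$ passes, so an algorithm that computes the BWT in $o(\log n)$ passes --- combined with $\Oh{1}$-pass pre- and post-processing --- would yield an $o(\log n)$-pass sorting algorithm and a contradiction. The budget is tight: any pre/post work must fit inside $o(\log n)$ passes, so I aim for $\Oh{1}$ passes on each side of the BWT call.

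Given an input $x_1, \ldots, x_n$ to sort, I would first encode each $x_i$ as a $k = \lceil \log n \rceil$-bit binary string $b_i$ and, in one pass, build a ternary string $T = B_1 B_2 \cdots B_n$ over $\{0, 1, 2\}$, where each block $B_i$ is a short word packaging $b_i$ together with a small structural wrapper. The wrapper is designed so that $T$ contains, for each $i$, a collection of ``value-reading'' suffixes that all begin with $b_i$ (so they sort lexicographically by $x_i$), while the characters immediately preceding those suffixes in $T$ --- the BWT entries at their ranks --- collectively spell out the bits of $b_i$ in a predictable, position-indexed order. A first approximation such as $B_i = b_i \cdot b_i \cdot 2$ already exposes the last bit of $b_i$ as the BWT entry at the rank of the suffix starting at the second copy of $b_i$; by thickening the block to include $k$ read windows, separated by distinct position markers, I would arrange for the $k$ BWT entries associated with block $i$ to be $b_i[0], b_i[1], \ldots, b_i[k-1]$ in a known relative order. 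A single $\Oh{1}$-pass sweep over $\mathsf{BWT}(T)$ with $\Oh{\log n}$ bits of state then reassembles, for each sorted rank $r$, the $k$ bits of the $r$-th smallest input and emits it as the next sorted output.

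The main obstacle is the combinatorics of the suffix sort. I must verify that (a) the $k$ read-window suffixes for each block sort among themselves in the claimed position-indexed order rather than a bit-value-dependent one --- this is what forces the use of explicit position markers between the copies of $b_i$ and rules out naive repetitions like $B_i = b_i^m \cdot 2$; (b) the read-window suffixes of different blocks interleave only in block-by-value order, so ties among equal $x_i$'s are broken cleanly by appending a unique $\Oh{\log n}$-bit block index to each $B_i$, blowing up $|T|$ only by a constant factor; and (c) the many ``junk'' suffixes of $T$ (those starting in the middle of a block or at a separator) fall into BWT intervals whose boundaries depend only on $n$ and $k$, so that they can be recognised and skipped in a single pass. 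Once those alignments are verified, the reduction uses $\Oh{1}$ passes and $\Oh{\log n}$ bits outside the BWT call, so an $o(\log n)$-pass BWT algorithm would contradict Grohe and Schweikardt's sorting lower bound, completing the proof.
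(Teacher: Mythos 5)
Your overall strategy is exactly the paper's: reduce Grohe--Schweikardt sorting to computing the BWT of a ternary string, with only $\Oh{1}$ passes of pre- and post-processing. But the entire difficulty of the proof lives in the three verification obligations (a), (b), (c) that you list and then defer, and your proposed anchoring makes (c) genuinely problematic. If the ``useful'' suffixes are the ones that begin with the value $b_i$, then the junk suffixes starting in the middle of a copy of $b_i$ also begin with bits of $b_i$, so they interleave with the read-window suffixes in a way that depends on the input values, not just on $n$ and $k$; the set of BWT ranks you need to read is then not determined in advance, and your single-pass decoding sweep has no input-independent way to recognise which positions to keep. Saying the markers and the block index ``would arrange'' the right order is the claim to be proved, not a proof of it.

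The paper's construction resolves all three obligations with one move: anchor the useful suffixes on the separator rather than on the value. Each bit $x_i[j]$ is replaced by the phrase \(x_i[j]\ 2\ x_i\ i\ j\) (with $2$ a single character, $x_i$ in $2\log n$ bits, $i$ in $\log n$ bits, $j$ in $\log\log n + 1$ bits). The suffixes beginning with $2$ are exactly the $2n\log n$ useful ones; since $2$ is the largest character they occupy the last $2n\log n$ positions of the suffix array, a block whose location depends only on $n$ (this kills (c)); within that block they sort by the string \(x_i\ i\ j\) that follows the $2$, i.e.\ by value, then by original index, then by bit position (this kills (a) and (b) simultaneously, with no case analysis about how repeated copies of $b_i$ compare to each other). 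The character preceding each such suffix is the bit $x_i[j]$ itself, so the last $2n\log n$ characters of the BWT are literally $x_1, \ldots, x_n$ in sorted order and the post-processing is a single read-off pass. You should either adopt this encoding or supply an explicit block design together with a full proof of the suffix-order claims; as written, the argument has a gap precisely where the theorem's content is.
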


\begin{proof}
Suppose we are given a sequence of $n$ numbers \(x_1, \ldots, x_n\), each of \(2 \log n\) bits.  Grohe and Schweikardt showed we cannot generally sort such a sequence using \(o (\log n)\) passes over $\Oh{1}$ tapes.  We now use \(o (\log n)\) passes to turn \(x_1, \ldots, x_n\) into a ternary string $s$ such that, by calculating \(\mathsf{BWT} (s)\), we sort \(x_1, \ldots, x_n\).  It follows from this reduction that we cannot compute the BWT using \(o (\log n)\) passes, either.

With one pass, \(O (\log n)\) bits of memory and two tapes, for \(1 \leq i \leq n\) and \(1 \leq j \leq 2 \log n\), we replace the $j$th bit \(x_i [j]\) of $x_i$ by \(x_i [j]\ 2\ x_i\ i\ j\), writing 2 as a single character, $x_i$ in \(2 \log n\) bits, $i$ in \(\log n\) bits and $j$ in \(\log \log n + 1\) bits; the resulting string $s$ is of length \(2 n \log n (3 \log n + \log \log n + 2)\).  The only characters followed by 2s in $s$ are the bits at the beginning of replacement phrases, so the last \(2 n \log n\) characters of \(\mathsf{BWT} (s)\) are the bits of \(x_1, \ldots, x_n\); moreover, since the lexicographic order of equal-length binary strings is the same as their numeric order, the \(x_i [j]\) bits will be arranged by the $x_i$ values, with ties broken by the $i$ values (so if \(x_i = x_{i'}\) with \(i < i'\), then every \(x_i [j]\) comes before every \(x_{i'} [j']\)) and further ties broken by the $j$ values; therefore, the last \(2 n \log n\) bits of the transformed string are \(x_1, \ldots, x_n\) in sorted order.
\qed
\end{proof}

To show we need at least two streams to achieve entropy-only bounds, we use De Bruijn cycles in a proof similar to the one for Theorem~\ref{thm:grammar-based}.  A $\sigma$-ary De Bruijn cycle of order $k$ is a cyclic sequence in which every possible $k$-tuple appears exactly once.  For example, Figure~\ref{fig:cycles} shows binary De Bruijn cycles of orders 3 and 4.  Our argument this time is based on Lemma~\ref{lem:substring} and the results below about De Bruijn cycles.  We note as a historical aside that Theorem~\ref{thm:AB51} was first proven for the binary case in 1894 by Flye Sainte-Marie~\cite{Fly94}, but his result was later forgotten; De Bruijn~\cite{Bru46} gave a similar proof for that case in 1946, then in 1951 he and Van Aardenne-Ehrenfest~\cite{AB51} proved the general version we state here.

\begin{figure}[t]
\begin{center}
\begin{tabular}{c@{\hspace{20ex}}c}
\begin{tabular}{cccc}
& 0 & 0 &\\
1 &&& 0\\
0 &&& 1\\
& 1 & 1 &
\end{tabular}
& \begin{tabular}{cccccc}
1 & 0 & 0 & 0 & 0 & 1\\
1 &&&&& 0\\
1 &&&&& 0\\
1 & 0 & 1 & 0 & 1 & 1
\end{tabular}
\end{tabular}
\caption{Examples of binary De Bruijn cycles of orders 3 and 4.}
\label{fig:cycles}
\end{center}
\end{figure}

\begin{lemma} \label{lem:entropy}
If \(s \in d^*\) for some binary $\sigma$-ary De Bruijn cycle $d$ of order $k$, then \(n H_k^* (s) = \Oh{\sigma^k \log n}\).
\end{lemma}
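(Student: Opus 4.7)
The plan is to reduce the statement to the inequality \(n H_k^*(s) \le n H_k(s) + \Oh{\sigma^k \log n}\) that the excerpt has already recorded, by first showing the unmodified entropy satisfies \(n H_k(s) = 0\). The defining property of a \(\sigma\)-ary De Bruijn cycle \(d\) of order \(k\) is that each of the \(\sigma^k\) distinct \(k\)-tuples appears exactly once in \(d\) when \(d\) is read cyclically. Consequently, for every \(k\)-tuple \(w\) there is a unique ``cyclic successor'' character in \(d\) --- the character one position past the unique cyclic occurrence of \(w\).

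Next I would lift this to the linear string \(s = d^m\) of length \(n = m \sigma^k\). If \(w\) occurs at linear position \(p\) in \(s\) and \(p + k \le n\), then the character at position \(p + k\) depends only on \((p + k - 1) \bmod \sigma^k\); but \(p \bmod \sigma^k\) is forced by the starting position of \(w\) inside the cycle, so every such following character coincides with the cyclic successor of \(w\). Therefore, for each distinct \(k\)-tuple \(w\), the string \(w_s\) is either empty or a unary string over the single character that cyclically succeeds \(w\) in \(d\). In either case \(H_0(w_s) = 0\), and summing over all \(w\) gives \(n H_k(s) = 0\).

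Finally I would plug this into the inequality recalled in the excerpt, obtaining
\[ n H_k^*(s) \;\le\; n H_k(s) + \Oh{\sigma^k \log n} \;=\; \Oh{\sigma^k \log n}, \]
which is the claim. The only point where some care is required is handling occurrences of \(w\) whose following position falls past the end of \(s\): such occurrences contribute nothing to \(w_s\) and hence cannot introduce a ``wrong'' successor character, so the unary conclusion about \(w_s\) survives untouched. Beyond this bookkeeping step, no real obstacle arises, since the De Bruijn property does all of the work and the modified-versus-unmodified gap is absorbed by the cited \(\Oh{\sigma^k \log n}\) slack.
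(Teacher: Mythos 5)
Your proposal is correct and follows exactly the paper's route: the paper's own proof simply observes that by the De Bruijn property each distinct $k$-tuple is always followed by the same character, so \(n H_k (s) = 0\), and then invokes the stated inequality \(n H_k^* (s) \leq n H_k (s) + \Oh{\sigma^k \log n}\). Your version just fills in the bookkeeping about linear versus cyclic occurrences, which the paper leaves implicit.
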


\begin{proof}
By definition, each distinct $k$-tuple is always followed by the same distinct character; therefore, \(n H_k (s) = 0\) and \(n H_k^* (s) = \Oh{\sigma^k \log n}\). \qed
\end{proof}

\begin{theorem}[Van Aardenne-Ehrenfest and De Bruijn, 1951] \label{thm:AB51}
There are \(\left( \sigma!^{\sigma^{k - 1}} / \sigma^k \right)\) $\sigma$-ary De Bruijn cycles of order $k$.
\end{theorem}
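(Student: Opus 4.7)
The plan is to interpret De Bruijn cycles as Eulerian circuits in a directed graph and apply the BEST theorem together with the directed matrix-tree theorem. Define the De Bruijn graph $B(\sigma,k)$ to have as vertices the $\sigma^{k-1}$ strings of length $k-1$ over the alphabet, with one directed edge, labelled by the $k$-tuple $a_1 \ldots a_k$, from $a_1 \ldots a_{k-1}$ to $a_2 \ldots a_k$. Every vertex has in-degree and out-degree $\sigma$, and the graph is strongly connected, so it is Eulerian. Reading the edge labels along an Eulerian circuit of $B(\sigma, k)$ produces a cyclic string in which each $k$-tuple appears exactly once, and conversely every De Bruijn cycle traces out an Eulerian circuit, so De Bruijn cycles are in bijection with Eulerian circuits of $B(\sigma, k)$ regarded as cyclic sequences of edges.

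I would then invoke the BEST theorem: in a connected Eulerian digraph $G$ on $n$ vertices, the number of Eulerian circuits, counted as cyclic sequences of edges, equals
\[ec(G) \;=\; t_w(G) \prod_{v}(d^+_v - 1)!,\]
where $d^+_v$ is the out-degree of $v$ and $t_w(G)$ is the number of spanning in-arborescences rooted at any fixed $w$ (independent of $w$ in an Eulerian digraph). For $B(\sigma, k)$ the product factor is $\bigl((\sigma-1)!\bigr)^{\sigma^{k-1}}$, so everything reduces to computing $t_w$.

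For this I would apply the directed matrix-tree theorem, according to which $n \cdot t_w$ equals the product of the nonzero eigenvalues of the Laplacian $L = \sigma I - A$, where $A$ is the adjacency matrix of $B(\sigma,k)$. The key structural observation is that $A^{k-1} = J$, the all-ones matrix: from any vertex $u_1 \ldots u_{k-1}$ there is exactly one walk of length $k-1$ to any target vertex $v_1 \ldots v_{k-1}$, obtained by appending $v_1,\ldots,v_{k-1}$ one symbol at a time. Since $J$ has eigenvalues $\sigma^{k-1}$ (simple, with eigenvector $\mathbf{1}$) and $0$ with multiplicity $\sigma^{k-1}-1$, the eigenvalues of $A$ are $\sigma$ (simple) and $0$ totalling $\sigma^{k-1}-1$, so those of $L$ are $0$ (simple) and $\sigma$ with multiplicity $\sigma^{k-1}-1$. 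Hence $t_w = \sigma^{\sigma^{k-1}-1}/\sigma^{k-1} = \sigma^{\sigma^{k-1}-k}$, and multiplying through yields $ec(B(\sigma,k)) = \sigma^{\sigma^{k-1}-k}\bigl((\sigma-1)!\bigr)^{\sigma^{k-1}} = \sigma!^{\sigma^{k-1}}/\sigma^k$, as claimed.

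The principal obstacle will be the arborescence count; everything else is bookkeeping around the BEST theorem. The identity $A^{k-1}=J$ is what makes that count clean, since it lets one read off the Laplacian spectrum directly instead of having to evaluate a determinant of size $\sigma^{k-1}-1$ by hand.
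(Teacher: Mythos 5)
Your argument is correct and complete: the bijection between De Bruijn cycles (as cyclic sequences) and Eulerian circuits of the De Bruijn graph, the application of the BEST theorem, and the computation of the arborescence count via the observation that \(A^{k-1} = J\) (so the Laplacian \(\sigma I - A\) has eigenvalue \(0\) once and \(\sigma\) with multiplicity \(\sigma^{k-1}-1\), giving \(t_w = \sigma^{\sigma^{k-1}-k}\)) all check out, and the final arithmetic \(\sigma^{\sigma^{k-1}-k}\bigl((\sigma-1)!\bigr)^{\sigma^{k-1}} = \sigma!^{\sigma^{k-1}}/\sigma^k\) is right. Note that the paper does not prove this theorem at all --- it is imported as a citation to Van Aardenne-Ehrenfest and De Bruijn's 1951 paper --- so there is no in-paper proof to compare against; what you have written is essentially the standard (and historically original) tree-counting proof of that result, and the only points worth making fully explicit in a polished write-up are that \(A\) need not be diagonalizable (so one should argue via the characteristic polynomial that the spectrum of \(A\) consists of \(\sigma\) once and \(0\) otherwise) and that for a balanced digraph all cofactors of the Laplacian coincide, which justifies the ``product of nonzero eigenvalues equals \(n \cdot t_w\)'' step.
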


\begin{corollary} \label{cor:cycle lbound}
We cannot store most $k$th-order De Bruijn cycles in \(o (\sigma^k \log \sigma)\) bits.
\end{corollary}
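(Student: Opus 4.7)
The plan is to argue by a straightforward counting/pigeonhole argument against the total count given by Theorem~\ref{thm:AB51}. Any lossless encoding into $b$ bits can represent at most $2^b$ distinct objects, so if we could store a constant fraction (or indeed any super-polynomially small fraction) of all $\sigma$-ary De Bruijn cycles of order $k$ in $o(\sigma^k \log \sigma)$ bits, the number of representable cycles would be at most $2^{o(\sigma^k \log \sigma)}$. So the key step is just to show that the logarithm of the total count from Theorem~\ref{thm:AB51} is $\Theta(\sigma^k \log \sigma)$, giving the contradiction.

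Concretely, I would take logarithms of the expression $\sigma!^{\sigma^{k-1}}/\sigma^k$ to get
\[
\sigma^{k-1} \log(\sigma!) - k \log \sigma.
\]
Then I would invoke Stirling's approximation in the form $\log(\sigma!) = \sigma \log \sigma - \Theta(\sigma)$, so that $\sigma^{k-1}\log(\sigma!) = \sigma^k \log \sigma - \Theta(\sigma^k)$, which dominates the $k\log\sigma$ subtraction. Thus the total number of cycles is $2^{\Theta(\sigma^k \log \sigma)}$.

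Finally I would conclude by pigeonhole: if a proposed encoding used only $o(\sigma^k \log \sigma)$ bits on all but a negligible fraction of cycles, the image set has size $2^{o(\sigma^k \log \sigma)}$, which is asymptotically a vanishing fraction of $2^{\Theta(\sigma^k \log \sigma)}$, contradicting "most". The only mild subtlety — and the one point to state carefully — is to make precise what "most" means (say, at least half, or any fixed constant fraction), since the counting bound is robust enough to rule out any such fraction; there is no real obstacle, just bookkeeping of constants hidden in the $\Theta$.
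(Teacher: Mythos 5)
Your proposal is correct and follows essentially the same route as the paper, which simply notes that by Stirling's Formula \(\log\left(\sigma!^{\sigma^{k-1}}/\sigma^k\right) = \Theta(\sigma^k \log \sigma)\) and leaves the pigeonhole step implicit. You spell out the counting argument in more detail, but the substance is identical.
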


\begin{proof}
By Stirling's Formula, \(\log \left( \sigma!^{\sigma^{k - 1}} / \sigma^k \right) = \Theta (\sigma^k \log \sigma)\). \qed
\end{proof}

Since there are $\sigma^k$ possible $k$-tuples, $k$th-order De Bruijn cycles have length $\sigma^k$, so Corollary~\ref{cor:cycle lbound} means that we cannot compress most De Bruijn cycles by more than a constant factor.  Therefore, we can prove a lower bound similar to Theorem~\ref{thm:grammar-based} by supposing that $s$'s repeated substring is a De Bruijn cycle, then using Lemma~\ref{lem:entropy} instead of Lemma~\ref{lem:small grammar}.

\begin{theorem} \label{thm:entropy-only lbound}
With one stream, we cannot achieve an entropy-only bound.
\end{theorem}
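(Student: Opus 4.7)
My plan is to follow the template of Theorem~\ref{thm:grammar-based}, but use an incompressible De~Bruijn cycle as the repeated substring so that Lemma~\ref{lem:entropy} controls $n H_k^*(s)$ in place of the grammar bound from Lemma~\ref{lem:small grammar}.

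I would suppose for contradiction that some one-stream algorithm using $m$ bits of memory, $p$ passes, and $mp = \log^{\mathcal{O}(1)} n$ achieves an entropy-only bound $\lambda n H_k^*(s) + g_k$ simultaneously for all $k$. Fix $\sigma = 2$ and set $k = \lceil \log(mp \log n) \rceil$, so that $\sigma^k$ is asymptotically larger than $mp$ while $k$ itself is only $\Theta(\log \log n)$. By Corollary~\ref{cor:cycle lbound}, there is a binary De~Bruijn cycle $d$ of order $k$ whose descriptive complexity is $\Omega(\sigma^k \log \sigma) = \Omega(mp \log n)$; take $s = d^{n/\sigma^k}$.

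Applying Lemma~\ref{lem:substring} to each of the $n/\sigma^k$ disjoint windows of the stream that initially holds one copy of $d$, the output produced while the head is over that window, together with the $2mp$ bits recording the memory configurations at entry and exit across all passes, must suffice to recover $d$. Since $d$ cannot be compressed by more than a constant factor and $2mp = o(\sigma^k \log \sigma)$, the output per window is $\Omega(\sigma^k \log \sigma)$, giving a total output of $\Omega(n \log \sigma) = \Omega(n)$ bits. On the other hand, Lemma~\ref{lem:entropy} gives $n H_k^*(s) = \Oh{\sigma^k \log n} = \log^{\mathcal{O}(1)} n$, so the supposed entropy-only bound caps the total output at $\log^{\mathcal{O}(1)} n + g_k$. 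Since $g_k$ is a function of $\sigma$ and $k$ alone and $k = \Theta(\log \log n)$, for the standard form of entropy-only bound (Manzini's $g_k$ is polynomial in $\sigma^k$) this second term is itself $\log^{\mathcal{O}(1)} n = o(n)$, contradicting the $\Omega(n)$ lower bound for all sufficiently large $n$.

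The main delicacy I expect is the choice of $k$: we need $\sigma^k$ comfortably larger than $mp$ so that the two memory snapshots cannot encode a period on their own (this drives the substring argument), yet $k$ small enough that $g_k$ remains $o(n)$. The setting $k = \Theta(\log \log n)$ handles the former by construction and also keeps $g_k$ small under any standard growth assumption; an adversarially fast-growing $g$ would simply call for choosing $k$ slightly differently for each $n$, since the admissible window of $k$ is wide.
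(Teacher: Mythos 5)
Your proposal is correct and follows essentially the same route as the paper's proof: take $s$ periodic with period $\Theta(mp\log n)$ whose repeated substring is an incompressible De~Bruijn cycle of order $k=\Theta(\log\log n)$, apply Lemma~\ref{lem:substring} window by window to force $\Omega(n)$ output, and contrast with Lemma~\ref{lem:entropy}. Your explicit check that the additive constant $g_k$ stays polylogarithmic for this choice of $k$ is a detail the paper leaves implicit, but it is the same argument.
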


\begin{proof}
As in the proof of Theorem~\ref{thm:grammar-based}, suppose an algorithm uses only one stream, $m$ bits of memory and $p$ passes to compress $s$, with \(m p = \log^{\mathcal{O} (1)} n\), and consider a machine performing that algorithm.  This time, however, suppose $s$ is binary and periodic with period $m p\,f (n)$, where \(f (n) = \Oh{\log n}\) is a function that increases without bound; furthermore, suppose $s$'s repeated substring $t$ is a $k$th-order De Bruijn cycle, \(k = \log (m p\,f (n))\), that is not compressible by more than a constant factor. Lemma~\ref{lem:substring} implies that the machine's output while over a part of the stream that initially holds a copy of $t$, must be \(\Omega (m p\,f (n) - m p) = \Omega (m p\,f (n))\).  Therefore, the machine's complete output must be \(\Omega (n)\) bits.  By Lemma~\ref{lem:entropy}, however, \(n H_k^* (s) = \Oh{2^k \log n} = \Oh{m p\,f (n) \log n} \subset \log^{\mathcal{O} (1)} n\). \qed
\end{proof}

Recall that in Section~\ref{sec:universal} we asserted the following claim, which we are now ready to prove.

\begin{theorem} \label{thm:redundancy lbound}
If we use $m$ bits of memory and $p$ passes over one stream and achieve universal compression with an \(\Oh{\sigma^k r}\) redundancy term, for all $k$ simultaneously, then \(m p r = \Omega (n / f (n))\) for any function $f$ that increases without bound.
\end{theorem}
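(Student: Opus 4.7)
The plan is to reuse the adversary from the proof of Theorem~\ref{thm:entropy-only lbound} almost verbatim, and simply feed the hypothesized $\Oh{\sigma^k r}$ redundancy into the same incompressibility argument. Fix any function $f$ that increases without bound; we may harmlessly assume $f(n) = \Oh{\log n}$ and that $\ell := mp\,f(n)$ is a power of two (otherwise round $f$ down to the nearest such value). Let $n$ be a multiple of $\ell$ and take $s = t^{n/\ell}$, where $t$ is a binary De Bruijn cycle of order $k = \log \ell$; by Corollary~\ref{cor:cycle lbound} we may choose $t$ so that its Kolmogorov complexity is $\Omega(\ell)$.

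Applying Lemma~\ref{lem:substring} to each of the $n/\ell$ consecutive stretches of the stream that initially hold one copy of $t$, we see that $t$ is determined by its length, by the $2p$ memory configurations at which the head enters and leaves such a stretch (totalling $2mp$ bits), and by the output produced while the head is over it. Since $mp = \log^{\Oh{1}} n = o(\ell)$ (because $f$ is unbounded) while $t$ has complexity $\Omega(\ell)$, the output over each copy of $t$ must be $\Omega(\ell - mp) = \Omega(mp\,f(n))$ bits, so summing over the $n/\ell$ copies the algorithm's total output is $\Omega(n)$ bits.

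On the other hand, every $k$-tuple in $s$ is followed by the same character (the defining property of a De Bruijn cycle), so $H_k(s) = 0$ exactly as in the proof of Lemma~\ref{lem:entropy}. The hypothesized bound applied with this choice of $k$ gives a total output of at most $n H_k(s) + \Oh{\sigma^k r} = \Oh{2^k r} = \Oh{mp\,f(n)\cdot r}$. Comparing with the lower bound from the previous paragraph yields $n = \Oh{mp\,f(n)\cdot r}$, that is, $mpr = \Omega(n/f(n))$, which is the claim. The only real obstacle is cosmetic: arranging that $\ell$ is a power of two, that $n$ is a multiple of $\ell$, and that an incompressible De Bruijn cycle of the required order actually exists. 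The first two cost only lower-order terms, and the third follows by a pigeonhole comparing the number of short binary descriptions against the count of De Bruijn cycles given by Theorem~\ref{thm:AB51}.
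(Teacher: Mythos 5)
Your proposal is correct and follows exactly the route the paper takes: it reuses the periodic De Bruijn-cycle adversary from Theorem~\ref{thm:entropy-only lbound}, for which \(n H_k(s) = 0\) with \(k = \log(mp\,f(n))\) yet Lemma~\ref{lem:substring} forces \(\Omega(n)\) output, so the hypothesized \(\Oh{\sigma^k r}\) redundancy yields \(r = \Omega(n/(mp\,f(n)))\) and hence \(mpr = \Omega(n/f(n))\). The paper's own proof is just a one-line pointer to that earlier argument, so your version is simply a more explicit write-up of the same idea.
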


\begin{proof}
Consider the proof of Theorem~\ref{thm:entropy-only lbound}: \(n H_k (s) = 0\) but we must output \(\Omega (n)\) bits, so \(r = \Omega (n / \sigma^k) = \Omega (n / (m p\,f (n)))\). \qed
\end{proof}

Notice Theorem~\ref{thm:entropy-only lbound} also implies a lower bound for computing the BWT: if we could compute the BWT with one stream then, since we can compute move-to-front, run-length and arithmetic coding using $\Oh{\log n}$ bits of memory and $\Oh{1}$ passes over one stream, we could thus achieve an entropy-only bound with one stream, contradicting Theorem~\ref{thm:entropy-only lbound}.

\begin{corollary} \label{cor:BWT_lbound}
With one stream, we cannot compute the BWT.
\end{corollary}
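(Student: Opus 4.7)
The plan is to argue by contradiction, reducing the problem of achieving an entropy-only bound to that of computing the BWT. Suppose, for contradiction, that we could compute $\mathsf{BWT}(s)$ with one stream using \(\log^{\mathcal{O}(1)} n\) bits of memory and \(\log^{\mathcal{O}(1)} n\) passes. I will show this hypothesis lets us execute the entire Burrows--Wheeler-based pipeline of Theorem~\ref{thm:Man01} on one stream within the same resource budget, directly contradicting Theorem~\ref{thm:entropy-only lbound}.

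First I would recall that the three post-BWT stages in Manzini's pipeline --- move-to-front coding, run-length coding, and arithmetic coding --- are intrinsically one-pass and use only small local state. Move-to-front maintains the current permutation of the alphabet in \(\Oh{\sigma \log \sigma} = \Oh{\log n}\) bits (for fixed $\sigma$); run-length coding needs only a current-symbol register and an \(\Oh{\log n}\)-bit counter; and arithmetic coding maintains a pair of \(\Oh{\log n}\)-bit interval endpoints, emitting bits as the interval narrows. The paper already notes all three can be computed with \(\Oh{\log n}\) bits of memory and \(\Oh{1}\) passes over one stream, so I would cite that observation rather than re-verify it.

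Next I would compose the stages on the single stream by running them in sequence: the hypothetical BWT step rewrites the stream to $\mathsf{BWT}(s)$, then we apply move-to-front, then run-length, then arithmetic coding, each pass overwriting the stream with the next stage's output. A constant number of stages, each using \(\log^{\mathcal{O}(1)} n\) memory and \(\log^{\mathcal{O}(1)} n\) passes, still uses \(\log^{\mathcal{O}(1)} n\) memory and \(\log^{\mathcal{O}(1)} n\) passes overall. By Theorem~\ref{thm:Man01} the final output obeys an entropy-only bound on $s$, contradicting Theorem~\ref{thm:entropy-only lbound}.

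The only real subtlety --- rather than a serious obstacle --- is checking that the composition is legitimate in the read/write streams model: each stage must read its input sequentially and produce its output in a form that the next stage can read sequentially. This holds for MTF, RLE, and arithmetic coding essentially by construction, since each emits codewords in the same order as it consumes symbols, so the single stream can be rewritten in place between stages (perhaps with a constant number of extra passes to clean up trailing garbage). With this verification, the reduction goes through and the corollary follows immediately.
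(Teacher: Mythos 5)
Your argument is exactly the paper's: it derives the corollary by composing the hypothetical one-stream BWT with the one-pass move-to-front, run-length and arithmetic coding stages to obtain an entropy-only bound with one stream, contradicting Theorem~\ref{thm:entropy-only lbound}. The proposal is correct and matches the paper's proof, with some extra (harmless) detail about why the post-BWT stages compose on a single stream.
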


In the conference version of this paper~\cite{Gag09} we closed with a brief discussion of three entropy-only bounds that we proved with Manzini~\cite{GM07a}.  Our first bound was an improved analysis of the BWT followed by move-to-front, run-length and arithmetic coding (which lowered the coefficient from \(5 + \epsilon\) to \(4.4 + \epsilon\)), but our other bounds (one of which had a coefficient of \(2.69 + \epsilon\)) were analyses of the BWT followed by algorithms which we were not sure could be implemented with $\Oh{1}$ streams.  We now realize that, since both of these other algorithms can be computed with logarithmic work space, it follows from Theorem~\ref{thm:HS08} that they can indeed be computed with $\Oh{1}$ streams.

After having proven that we cannot compute the BWT with one stream, we promptly start working with Ferragina and Manzini on a practical algorithm~\cite{FGM12} that does exactly that.  However, that algorithm does not fit into the streaming models we have considered in this paper; in particular, the product of the internal memory and passes there is $\Oh{n \log n}$ bits, but we use only $n$ bits of workspace on the disk.  The existence of a practical algorithm for computing the BWT in external memory raises the question of whether we can query BWT-based compressed indexes quickly in external memory.  Chien et al.~\cite{CHSV08} proved lower bounds for indexed pattern matching in the external-memory model, but that model allows does not distinguish between sequential and random access to blocks.  The read/write-streams model is also inappropriate for analyzing the complexity of this task, since we can trivially use only one pass over one stream if we leave the text uncompressed and scan it all with a classic sequential pattern-matching algorithm.  Orlandi and Venturini~\cite{OV11} recently showed how we can store a sample of the BWT that lets us estimate what parts of the full BWT we need to read in order to answer a query.  If we modify their data structure slightly, we can make it recursive; i.e., with a smaller sample we can estimate what parts of the sample we need to read in order to estimate what parts of the full BWT we need to read.  Suppose we store on disk a set of samples whose sizes increase exponentially, finishing with the BWT itself.  We use each sample in turn to estimate what parts of the next sample we need to read, then read them into internal memory using only one pass over the next sample.  This increases the size of the whole index only slightly and lets us answer queries by reading few blocks and in the order they appear on disk.  We are currently working to optimize and implement this idea.

\section*{Acknowledgments}

Many thanks to Ferdinando Cicalese, Paolo Ferragina, Pawe{\l} Gawrychowski, Roberto Grossi, Ankur Gupta, Andre Hernich, Giovanni Manzini, Jens Stoye and Rossano Venturini, for helpful discussions.  This research was done while the author was at the University of Eastern Piedmont in Alessandria, Italy, supported by the Italy-Israel FIRB Project ``Pattern Discovery Algorithms in Discrete Structures, with Applications to Bioinformatics'', and at Bielefeld University, Germany, supported by the Sofja Kovalevskaja Award from the Alexander von Humboldt Foundation and the German Federal Ministry of Education and Research.

\bibliographystyle{plain}
\bibliography{ahlswede}

\begin{thebibliography}{10}

\bibitem{AB51}
T.~van Aardenne-Ehrenfest and N.~G.~de Bruijn.
\newblock Circuits and trees in oriented linear graphs.
\newblock {\em Simon Stevin}, 28:203–--217, 1951.

\bibitem{ADRR04}
G.~Aggarwal, M.~Datar, S.~Rajagopalan, and M.~Ruhl.
\newblock On the streaming model augmented with a sorting primitive.
\newblock In {\em Proceedings of the 45th Symposium on Foundations of Computer
  Science}, pages 540--549, 2004.

\bibitem{ABD+07}
L.~Arge, M.~A. Bender, E.~D. Demaine, B.~{Holland-Minkley}, and J.~I. Munro.
\newblock An optimal cache-oblivious priority queue and its application to
  graph algorithms.
\newblock {\em SIAM Journal on Computing}, 36(6):1672--1695, 2007.

\bibitem{BH08}
P.~Beame and T.~Huynh.
\newblock On the value of multiple read/write streams for approximating
  frequency moments.
\newblock In {\em Proceedings of the 49th Symposium on Foundations of Computer
  Science}, pages 499--508, 2008.

\bibitem{BM04}
R.~S. Bird and {S.-C.} Mu.
\newblock Inverting the {Burrows-Wheeler} transform.
\newblock {\em Journal of Functional Programming}, 14(6):603--612, 2004.

\bibitem{Bru46}
N.~G.~de Bruijn.
\newblock A combinatorial problem.
\newblock {\em Koninklijke Nederlandse Akademie van Wetenschappen},
  49:758--764, 1946.

\bibitem{BW94}
M.~Burrows and D.~J. Wheeler.
\newblock A block-sorting lossless data compression algorithm.
\newblock Technical Report~24, Digital Equipment Corporation, 1994.

\bibitem{CLL+05}
M.~Charikar, E.~Lehman, D.~Liu, R.~Panigrahy, M.~Prabhakaran, A.~Sahai, and
  A.~Shelat.
\newblock The smallest grammar problem.
\newblock {\em IEEE Transactions on Information Theory}, 51(7):2554--2576,
  2005.

\bibitem{CY91}
J.~Chen and C.-K. Yap.
\newblock Reversal complexity.
\newblock {\em SIAM Journal on Computing}, 20(4):622--638, 1991.

\bibitem{CHSV08}
{Y.-F.} Chien, {W.-K.} Hon, R.~Shah, and J.~S. Vitter.
\newblock Geometric {Burrow-Wheeler Transform}: Linking range searching and
  text indexing.
\newblock In {\em Data Compression Conference}, pages 252--261, 2008.

\bibitem{CV05}
R.~Cilibrasi and P.~Vit\'{a}nyi.
\newblock Clustering by compression.
\newblock {\em IEEE Transactions on Information Theory}, 51(4):1523--1545,
  2005.

\bibitem{CT06}
T.~M. Cover and J.~A. Thomas.
\newblock {\em Elements of Information Theory}.
\newblock Wiley, 2nd edition, 2006.

\bibitem{EMS04}
F.~Erg{\"u}n, S.~Muthukrishnan, and S.~C. Sahinalp.
\newblock Sublinear methods for detecting periodic trends in data streams.
\newblock In {\em Proceedings of the 6th Latin American Symposium on
  Theoretical Informatics}, pages 16--28, 2004.

\bibitem{FGM12}
P.~Ferragina, T.~Gagie, and G.~Manzini.
\newblock Lightweight data indexing and compression in external memory.
\newblock {\em Algorithmica}, 63(3):707--730, 2012.

\bibitem{Fly94}
C.~{Flye Sainte-Marie}.
\newblock Solution to question nr.\ 48.
\newblock {\em L'Interm{\'e}diare de Math{\'e}maticiens}, 1:107--110, 1894.

\bibitem{Gag09}
T.~Gagie.
\newblock On the value of multiple read/write streams for data compression.
\newblock In {\em Proceedings of the 20th Symposium on Combinatorial Pattern
  Matching}, pages 68--77, 2009.

\bibitem{GG10}
T.~Gagie and P.~Gawrychowski.
\newblock Grammar-based compression in a streaming model.
\newblock In {\em Proceedings of the 4th Conference on Language and Automata
  Theory and Applications}, pages 273--284, 2010.

\bibitem{GM07a}
T.~Gagie and G.~Manzini.
\newblock Move-to-front, distance coding, and inversion frequencies revisited.
\newblock In {\em Proceedings of the 18th Symposium on Combinatorial Pattern
  Matching}, pages 71--82, 2007.

\bibitem{GM07b}
T.~Gagie and G.~Manzini.
\newblock Space-conscious compression.
\newblock In {\em Proceedings of the 32nd Symposium on Mathematical Foundations
  of Computer Science}, pages 206--217, 2007.

\bibitem{GKS07}
M.~Grohe, C.~Koch, and N.~Schweikardt.
\newblock Tight lower bounds for query processing on streaming and external
  memory data.
\newblock {\em Theoretical Computer Science}, 380(1--3):199--217, 2007.

\bibitem{GS05}
M.~Grohe and N.~Schweikardt.
\newblock Lower bounds for sorting with few random accesses to external memory.
\newblock In {\em Proceedings of the 24th Symposium on Principles of Database
  Systems}, pages 238--249, 2005.

\bibitem{GGV08}
A.~Gupta, R.~Grossi, and J.~S. Vitter.
\newblock Nearly tight bounds on the encoding length of the {Burrows-Wheeler
  Transform}.
\newblock In {\em Proceedings of the 4th Workshop on Analytic Algorithmics and
  Combinatorics}, pages 191--202, 2008.

\bibitem{HS08}
A.~Hernich and N.~Schweikardt.
\newblock Reversal complexity revisited.
\newblock {\em Theoretical Computer Science}, 401(1--3):191--205, 2008.

\bibitem{Knu98}
D.~E. Knuth.
\newblock {\em The Art of Computer Programming}, volume~3.
\newblock Addison-Wesley, 2nd edition, 1998.

\bibitem{KM99}
R.~Kosaraju and G.~Manzini.
\newblock Compression of low entropy strings with {Lempel-Ziv} algorithms.
\newblock {\em SIAM Journal on Computing}, 29(3):893--911, 1999.

\bibitem{Man01}
G.~Manzini.
\newblock An analysis of the {Burrows-Wheeler Transform}.
\newblock {\em Journal of the ACM}, 48(3):407--430, 2001.

\bibitem{MP80}
J.~I. Munro and M.~S. Paterson.
\newblock Selection and sorting with limited storage.
\newblock {\em Theoretical Computer Science}, 12:315--323, 1980.

\bibitem{Mut05}
S.~Muthukrishnan.
\newblock {\em Data Streams: Algorithms and Applications}.
\newblock Foundations and Trends in Theoretical Computer Science. Now
  Publishers, 2005.

\bibitem{NM07}
G.~Navarro and V.~M{\"a}kinen.
\newblock Compressed full-text indexes.
\newblock {\em ACM Computing Surveys}, 39(1), 2007.

\bibitem{OV11}
A.~Orlandi and R.~Venturini.
\newblock Space-efficient substring occurrence estimation.
\newblock In {\em Proceedings of the 30th Symposium on Principles of Database
  Systems}, pages 95--106, 2011.

\bibitem{Ruh03}
J.~M. Ruhl.
\newblock {\em Efficient Algorithms for New Computational Models}.
\newblock PhD thesis, Massachusetts Institute of Technology, 2003.

\bibitem{Ryt03}
W.~Rytter.
\newblock Application of {Lempel-Ziv} factorization to the approximation of
  grammar-based compression.
\newblock {\em Theoretical Computer Science}, 302(1--3):211--222, 2003.

\bibitem{Sav97}
S.~Savari.
\newblock Redundancy of the {Lempel-Ziv} incremental parsing rule.
\newblock {\em IEEE Transactions on Information Theory}, 43(1):9--21, 1997.

\bibitem{Sch07}
N.~Schweikardt.
\newblock Machine models and lower bounds for query processing.
\newblock In {\em Proceedings of the 26th Symposium on Principles of Database
  Systems}, pages 41--52, 2007.

\bibitem{ZL77}
J.~Ziv and A.~Lempel.
\newblock A universal algorithm for sequential data compression.
\newblock {\em IEEE Transactions on Information Theory}, 23(3):337--343, 1977.

\bibitem{ZL78}
J.~Ziv and A.~Lempel.
\newblock Compression of individual sequences via variable-rate coding.
\newblock {\em IEEE Transactions on Information Theory}, 24(5):530--536, 1978.

\end{thebibliography}

\end{document}